\title{Weighted and locally bounded list-colorings in split graphs, cographs, and partial $k$-trees}
\author{C\'edric Bentz\footnote{CNAM \& CEDRIC Laboratory, 292 rue Saint-Martin 75003 Paris, France (cedric.bentz@cnam.fr)}}
\date{}
\newtheorem{Th}{Theorem}
\newtheorem{Lm}{Lemma}
\newtheorem{Cor}{Corollary}
\newtheorem{Proposition}{Proposition}
\begin{document}
 \maketitle

\begin{abstract}
For a fixed number of colors, we show that, in node-weighted split graphs, cographs, and graphs of bounded tree-width, one can determine in polynomial time whether a proper list-coloring of the vertices of a graph such that the total weight of vertices of each color equals a given value in each part of a fixed partition of the vertices exists. We also show that this result is tight in some sense, by providing hardness results for the cases where any one of the assumptions does not hold. The edge-coloring variant is also studied, as well as special cases of cographs and split graphs.\\
\underline{\textbf{\em Keywords}}: locally bounded list-colorings, dynamic programming, \textbf{NP}-completeness, maximum flows, tree-width, split graphs, cographs.
\end{abstract}

\section{Introduction}\label{sec:Intro}

A \emph{proper coloring} of a given graph $G$ is an assignment of colors (integers) to its vertices, such that any two vertices linked by an edge of $G$ take different colors. For any given color, the set of vertices taking this color is called a \emph{color class}. In \cite{refBentz09}, a coloring problem with non global constraints on the sizes of the color classes was studied. More precisely, the following problem was considered:\\

\noindent {\sc LocallyBoundedColoring}\\
\textit{Instance}: A graph $G=(V,E)$, a partition $V_1, \dots, V_p$ of the vertex set $V$, and a list of $pk$ integral bounds $(n_{11}, \dots,
n_{1k}, n_{21}, \dots, n_{2k}, \dots, n_{p1}, \dots, n_{pk})$ such that $\sum_{j=1}^{k} n_{ij} = \vert V_i \vert$ for each $i \in \{1, \dots, p\}$.\\
\textit{Question}: Decide whether there exists a proper $k$-coloring of $G$ (i.e., a proper coloring of $G$ using $k$ colors) that is such that, for each $i \in \{1, \dots, p\}$ and for each $j \in \{1, \dots, k\}$, the number of vertices having color $j$ in $V_i$ is $n_{ij}$.\\

In this previous paper, it was shown that this problem can be solved in $O(n^{2pk-1}\log n)$ time when $G$ is a tree, where $n=\vert V \vert$. A link with a scheduling problem consisting in processing a set of unit tasks on a set of processors with various unavailability constraints was also presented.

Moreover, it was shown in \cite{refGravier02} that the following \emph{list-coloring} problem (i.e., where each vertex must take a color from a list of possible colors) with \emph{global} constraints on the size of the color classes is tractable in several classes of graphs:\\

\noindent {\sc BoundedListColoring}\\
\textit{Instance}: A graph $G=(V,E)$, a list of $k$ integral bounds $(n_{1}, \dots,
n_{k})$, and, for each vertex $v$ of $G$, a list of possible colors $L(v) \subseteq \{1, \dots, k\}$.\\
\textit{Question}: Decide whether there exists a proper $k$-coloring of $G$ such that, for each $i \in \{1, \dots, k\}$, the number of vertices having color $i$ is at most equal to the \emph{color bound} $n_i$, and the color of $v$ belongs to $L(v)$ for each vertex $v$.\\

More precisely, the authors of \cite{refGravier02} described efficient algorithms to solve this problem in graphs of bounded tree-width and in cographs (that can be defined as graphs with no induced path on four vertices \cite{refBrandstadt}). The notion of tree-width will be recalled formally when needed: it can be viewed as a measure of the tree-likeness of a graph, and equals 1 in forests (and 0 in graphs with no edges). Graphs of tree-width at most $t$ for some $t \geq 0$ are also called \emph{partial $t$-trees} \cite{refBrandstadt}.

In this paper, we generalize both results, and show how to solve efficiently the list-coloring problem with non global and weighted constraints on the sizes of the color classes in graphs of bounded tree-width (and also in cographs). More precisely, the problem we shall study can be formally defined as follows:\\

\noindent {\sc WeightedLocallyBoundedListColoring}\\
\textit{Instance}: A graph $G=(V,E)$, a weight function $w: V \rightarrow \mathbf N^*$, a partition $V_1, \dots, V_p$ of the vertex set $V$, a list of $pk$ integral bounds $(W_{11}, \dots, W_{1k}, W_{21}, \dots,$ $W_{2k}, \dots, W_{p1}, \dots, W_{pk})$ with $\sum_{j=1}^{k} W_{ij} = \sum_{v:v \in V_i} w(v)$ for each $i \in \{1, \dots, p\}$, and, for each vertex $v$ of $G$, a list of possible colors $L(v) \subseteq \{1, \dots, k\}$.\\
\textit{Question}: Decide whether there exists a proper $k$-coloring of $G$ such that, for each $i \in \{1, \dots, p\}$ and for each $j \in \{1, \dots, k\}$, the total weight of vertices having color $j$ in $V_i$ is $W_{ij}$, and the color of $v$ belongs to $L(v)$ for each vertex $v$.\\

In the scheduling application mentioned above, the weight of each vertex would correspond to the amount of resource needed to process the associated task. Moreover, apart from list-coloring problems, other constrained coloring problems related to ours have been studied in the literature, see for instance \cite{refBaker96,refBodlaender05,refBodlaenderJansen95,refBonomo11,refGravier99,refGolovach11,refHertz93,refJansen}. We will not review all the associated known results, as we will not need them all, and will only recall the useful ones when needed.%refNakprasitx2,refZhuBu

The \emph{equitable} coloring problem consists in finding a coloring where the sizes of any two color classes differ by at most one, while the \emph{bounded} coloring problem consists in finding a coloring where the size of any color class does not exceed a given bound (that is common to all color classes). Moreover, without the requirement of being a list-coloring, the problem studied in \cite{refGravier02} is defined as the \emph{capacitated} coloring problem in \cite{refBonomo11}. Hence, the bounded coloring problem is a special case of the capacitated coloring problem, where all the color bounds (i.e., all the bounds on the sizes of the color classes) are equal.

It was observed in \cite{refBonomo11} that the equitable coloring problem is in fact a special case of the capacitated coloring problem (and one in which the color bounds are actually reached), by setting $n_i = \left\lceil\frac{n}{k}\right\rceil$ for any $i \leq n \mod k$, and $n_i = \left\lfloor\frac{n}{k}\right\rfloor$ for any other $i$, where $n$ is the number of vertices. It was also observed in \cite{refBodlaender05} that any instance of the bounded coloring problem is equivalent to an instance of the equitable coloring problem with the same number of colors, by adding sufficiently many isolated vertices (which destroys neither the property of having bounded tree-width nor the one of being a cograph or a split graph, i.e., a graph whose vertex set can be partitioned into a clique and an independent set \cite{refBrandstadt}).

In fact, all of the above problems (including the $k$-coloring problem, which is a list-coloring problem where $L(v)=\{1, \dots, k\}$ for each vertex $v$) are special cases of {\sc BoundedListColoring}, and this is true for {\sc WeightedLocallyBoundedListColoring} as well, in most of the graphs that we shall consider: take any instance of {\sc BoundedListColoring},
keep the same color bounds, set $p$ and the weight of any vertex to 1, and add isolated vertices of weight 1 (that can take any color), so that the total number of vertices reaches $\sum_{i=1}^k n_i$.

Numerous applications of the list-coloring problem and of the capacitated coloring problem were described respectively in \cite{refJansen} and in \cite{refBodlaenderJansen95,refBonomo11}. Many papers also consider a precolored variant of the equitable or capacitated coloring problem, where some of the vertices are already colored \cite{refBodlaender05}. Observe that this can also be achieved, when considering an instance of the list-coloring problem, by defining the list of possible colors for each vertex that is already colored as a singleton, corresponding to the given color. Similarly, one can transform a list-coloring instance into a precolored one, by adding pendant vertices adjacent to each vertex $v$ (these pendant vertices being colored by the colors not in $L(v)$). Such an operation does not destroy the property of having bounded tree-width, but can destroy the one of being a cograph (or a split graph, depending on whether $v$ lies in the clique or in the independent set). Finally, when considering instances of {\sc WeightedLocallyBoundedListColoring} where $p$ is not fixed and $L(v) = \{1, \dots, k\}$ for each vertex $v$, then precoloring some vertices can be achieved by putting each of these vertices in one set $V_i$ of the partition $V_1, \dots, V_p$ ($V_i$ contains only this vertex), and defining the $W_{ij}$'s appropriately.

The present paper is organized as follows. We begin by providing, in Section \ref{sect:NPC}, tight hardness results for {\sc WeightedLocallyBoundedListColoring}. Then, we describe in Section \ref{sect:btw} a dynamic programming algorithm for this problem, that runs in polynomial time whenever we are not in the cases covered by Section \ref{sect:NPC} (i.e., when $p$, $k$ and the tree-width of the graph are fixed, and the vertex weights are polynomially bounded). In Section \ref{sect:cographs}, we prove that the problem is tractable in cographs under similar assumptions, and that, under weaker assumptions, it can be solved in polynomial time in several subclasses of cographs. Section \ref{sect:split-graphs} is devoted to split graphs, and we show that, in such graphs, the problem can be solved in (pseudo)polynomial time even when $p$ is not fixed (provided that $k$ is), and provide additional tractable special cases. Finally, in the last section of the paper, we extend to edge colorings all our results concerning (vertex) colorings in the graphs studied in Sections \ref{sect:btw} to \ref{sect:split-graphs}.

\section{\textbf{NP}-completeness proofs}\label{sect:NPC}

In this section, we prove that {\sc WeightedLocallyBoundedListColoring}, which is clearly in \textbf{NP}, is \textbf{NP}-complete, even in very restricted special cases.

Before detailing these special cases, we note that, when $k=3$, {\sc WeightedLocallyBoundedListColoring} is \textbf{NP}-complete in general graphs (i.e., when the tree-width is unbounded), even when $p=1$ and $w(v)=1$ for each vertex $v$, as deciding whether a graph can be colored using three colors is. Also note that this latter result does not hold when $k=2$: in this case, the coloring problem is trivial, as the input graph is bipartite (otherwise, the answer is \emph{no}).

Actually, {\sc WeightedLocallyBoundedListColoring} itself can be solved in polynomial time when $k=2$ and $p = O(1)$, provided that the vertex weights are polynomially bounded (in the next subsection, we deal with the case where they are not). Indeed, if $k=2$, then any connected component of the graph is bipartite (otherwise, the answer is \emph{no}), and hence admits two possible colorings. For each one, one can check whether such a coloring is a valid list-coloring in the associated connected component. This implies that the whole problem can be solved by a dynamic programming algorithm similar to the one given in \cite[Theorem 2]{refGravier02}: we consider the connected components one after the other, keeping track for each one of them of all the possible weights of vertices of color $c$ in the set $V_h$, for each $c$ and $h$, and then combining these weights whenever a new component is considered. The total amount of information that we need to keep track of is thus $O((\max_{h,c} W_{hc})^{2p})$, which is polynomial under our assumptions.

\subsection{When vertex weights are arbitrary}

We first give a short proof that {\sc WeightedLocallyBoundedListColoring} is \textbf{NP}-complete in general, even when $k=2$ (there are two colors) and $p=1$ (the vertex partition contains only one set). To do this, we consider the following well-known weakly \textbf{NP}-complete problem \cite{refGarey}:\\

\noindent {\sc Partition}\\
\textit{Instance}: $n+1$ positive integers $(a_1, \dots, a_{n}, B)$ such that $\sum_{i=1}^{n} a_i=2B$.\\
\textit{Question}: Decide whether there exists $I' \subset \{1, \dots, n\}$ such that $\sum_{i \in I'} a_i=B$.\\

Given an instance $(a_1, \dots, a_{n}, B)$ of {\sc Partition}, we define the following instance $I$ of {\sc WeightedLocallyBoundedListColoring}: the graph $G$ consists of $n$ isolated vertices $v_1, \dots, v_n$. We define $p=1$, $k=2$ and $W_{11}=W_{12}=B$. We also define, for each $i \in \{1, \dots, n\}$, $L(v_i)=\{1,2\}$ (i.e., any vertex can take any color) and $w(v_i)=a_i$. Then, $(a_1, \dots, a_{n}, B)$ has a solution if and only if $I$ admits a solution, i.e., a 2-coloring where the total weight of the vertices of each color is $B$: a vertex $v_i$ will take color 1 if and only if $i \in I'$. This implies:

\begin{Th}\label{th:NPC-partition}
{\sc WeightedLocallyBoundedListColoring} is \textbf{NP}-complete in the weak sense in graphs consisting of isolated vertices, even if $k=2$, $p=1$, and each vertex can take any color.
\end{Th}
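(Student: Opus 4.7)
The plan is to formalize the reduction sketched in the paragraph immediately preceding the theorem, namely from the weakly \textbf{NP}-complete problem {\sc Partition}. Since the construction is already spelled out (isolated vertices $v_1,\dots,v_n$ with $w(v_i)=a_i$, one part $V_1=V$, bounds $W_{11}=W_{12}=B$, and lists $L(v_i)=\{1,2\}$), the proof reduces to three routine verifications: membership in \textbf{NP}, correctness of the reduction in both directions, and polynomiality of the reduction in the input size of {\sc Partition}.

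First, I would argue membership in \textbf{NP}: a candidate coloring is a short certificate, and one can check in polynomial time that it is proper, that each vertex's color lies in its list $L(v)$, and that the weight sums $\sum_{v\in V_i,\, c(v)=j} w(v)$ match the prescribed $W_{ij}$'s. Next, I would handle the forward direction of the reduction: given $I'\subseteq\{1,\dots,n\}$ with $\sum_{i\in I'}a_i=B$, assign color $1$ to each $v_i$ with $i\in I'$ and color $2$ to every other vertex. Since $G$ has no edges, this is vacuously proper; every list is $\{1,2\}$, so the list constraint holds; the total weight of color $1$ in $V_1$ is $B=W_{11}$, and the total weight of color $2$ is $2B-B=B=W_{12}$. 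For the backward direction, from a valid coloring of the constructed instance I would recover $I':=\{i:c(v_i)=1\}$, for which $\sum_{i\in I'}a_i=W_{11}=B$ by the weight constraint.

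Finally, I would observe that the constructed instance has size linear in $n$ plus the total bit length of $a_1,\dots,a_n,B$, so the reduction is polynomial in the size of the {\sc Partition} input. Combined with the weak \textbf{NP}-completeness of {\sc Partition}, this yields weak \textbf{NP}-completeness of {\sc WeightedLocallyBoundedListColoring} even when the graph is edgeless, $k=2$, $p=1$, and $L(v)=\{1,2\}$ for every $v$.

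There is essentially no hard step here; the only point warranting care is the distinction between weak and strong \textbf{NP}-completeness. Because the $a_i$'s can be exponentially large in $n$, the reduction does not rule out a pseudo-polynomial algorithm, which is consistent with (and motivates) the following subsection, where the author presumably turns to polynomial-weight instances and shows that $k=2$, $p=O(1)$ becomes tractable, as already sketched in the introduction to Section~\ref{sect:NPC}.
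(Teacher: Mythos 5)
Your proposal follows exactly the paper's own argument: the same reduction from {\sc Partition} to an edgeless instance with $p=1$, $k=2$, $W_{11}=W_{12}=B$, $w(v_i)=a_i$ and $L(v_i)=\{1,2\}$, with the correspondence $c(v_i)=1 \iff i \in I'$. The additional verifications you spell out (membership in \textbf{NP}, both directions, polynomiality of the reduction, and the weak versus strong distinction) are all routine and correct, so the proof is sound and essentially identical to the paper's.
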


Note that this instance can be made connected by setting $k=3$ and adding, for each $i \in \{1, \dots, n-1\}$, two edges $(v_i,u_i)$ and $(u_i,v_{i+1})$, where the $u_i$'s are new vertices of color 3 (e.g., $L(u_i)=\{3\}$ for each $i \in \{1, \dots, n-1\}$). The graph $G$ we obtain is then a tree (in fact, a chain), and so has tree-width 1.

Also note that, if $k=2$ and $G$ is connected, then solving {\sc WeightedLocallyBoundedListColoring} is a trivial task, since a connected bipartite graph has only two proper (vertex) colorings using two colors.

In \cite{refBodlaender05}, Bodlaender and Fomin managed to show that the equitable coloring problem can be solved in polynomial time in graphs of bounded tree-width (i.e., even if the number of colors is not fixed). In the next subsections, we rule out this possibility by proving that, even if all vertex weights are polynomially bounded, {\sc WeightedLocallyBoundedListColoring} remains \textbf{NP}-complete in graphs of bounded tree-width, if either $k$ or $p$ is not fixed.

\subsection{When the number of colors is not fixed}

First, we look at the case where the number $k$ of colors is not fixed. We consider the following well-known strongly \textbf{NP}-complete problem \cite{refGarey}:\\

\noindent {\sc $3-$Partition}\\
\textit{Instance}: A set of $3n$ positive integers $A=(a_1, \dots, a_{3n})$, and an integer $B$ such that $\sum_{i=1}^{3n} a_i=nB$ and, for each $i \in \{1, \dots, 3n\}$, $B/4 < a_i < B/2$.\\
\textit{Question}: Decide whether $A$ can be partitioned into $n$ disjoint sets of three elements $A_1, \dots, A_n$ such that, for each $j \in \{1, \dots, n\}$, $\sum_{a_i \in A_j} a_i = B$.\\

Since this problem is strongly \textbf{NP}-complete, we can assume w.l.o.g. that the $a_i$'s are polynomially bounded in $n$. We construct an instance of {\sc WeightedLocallyBoundedListColoring} in a graph $G$ consisting of $3n$ isolated vertices $v_i$, by defining  $p=1$, $k=n$, $W_{1c}=B$ for each $c \in \{1, \dots, n\}$, and, for each $i \in \{1, \dots, 3n\}$, $w(v_i)=a_i$ and $L(v_i)=\{1, \dots, k\}$. It is easy to see that the equivalence between the solutions of the two instances is given by: for each $h \in \{1, \dots, n\}$, a vertex $v_i$ will take color $h$ in $G$ if and only if $a_i \in A_h$. Hence:

\begin{Th}\label{th:NPC-3partition}
{\sc WeightedLocallyBoundedListColoring} is \textbf{NP}-complete in the strong sense in graphs consisting of isolated vertices, even if $p=1$, all the vertex weights are polynomially bounded, and each vertex can take any color.
\end{Th}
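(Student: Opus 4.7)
The plan is to make the reduction sketched in the paragraph just before the theorem statement fully explicit, and then verify correctness carefully. Membership in \textbf{NP} is immediate (a coloring is a polynomial-size certificate whose validity, list-membership, and per-color weight constraints can be checked in polynomial time), so the real content is hardness.

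I would reduce from {\sc $3$-Partition}, which is strongly \textbf{NP}-complete, so the $a_i$'s may be assumed polynomially bounded in $n$. Given an instance $(A = (a_1, \dots, a_{3n}), B)$, I build the instance $I$ of {\sc WeightedLocallyBoundedListColoring} exactly as in the paper: the graph $G$ is the edgeless graph on $3n$ vertices $v_1, \dots, v_{3n}$, $p = 1$, $k = n$, $W_{1c} = B$ for every $c \in \{1, \dots, n\}$, and for each $i$, $w(v_i) = a_i$ and $L(v_i) = \{1, \dots, k\}$. The feasibility condition $\sum_{c=1}^{k} W_{1c} = nB = \sum_i a_i$ on the bounds holds, so $I$ is a legal instance, and because the $a_i$'s and $n$ are polynomial in the input size of $(A, B)$, all vertex weights in $I$ are polynomially bounded, and the construction runs in polynomial time.

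Next I would prove the equivalence. If $A_1, \dots, A_n$ is a solution of {\sc $3$-Partition}, assign color $h$ to every $v_i$ with $a_i \in A_h$. Since $G$ has no edges, this is automatically a proper coloring, every color is in $L(v_i)$, and the total weight of color $h$ is $\sum_{a_i \in A_h} a_i = B = W_{1h}$. Conversely, from any feasible coloring of $I$, define $A_h = \{ a_i : v_i \text{ has color } h \}$; the color-class weight constraint gives $\sum_{a_i \in A_h} a_i = W_{1h} = B$ for each $h$, which is half of the conclusion. The step that needs care, and is the only one with any real content, is showing that each $A_h$ has \emph{exactly} three elements: using the crucial {\sc $3$-Partition} hypothesis $B/4 < a_i < B/2$, each $A_h$ must contain at least $3$ items (since $2 \cdot (B/2) = B$ cannot be reached with $\leq 2$ terms strictly less than $B/2$ summing to $B$) and at most $3$ items (since $4 \cdot (B/4) = B$ cannot be reached with $\geq 4$ terms strictly greater than $B/4$ summing to $B$). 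Combined with $\sum_h |A_h| = 3n$, this forces $|A_h| = 3$ for every $h$, so $A_1, \dots, A_n$ is a valid $3$-partition.

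The main (mild) obstacle is precisely this counting argument; the rest is a routine transcription of the usual {\sc $3$-Partition} packing reduction to a coloring setting, exploiting the fact that with no edges the properness constraint is vacuous and the whole problem reduces to packing weights into $n$ bins of capacity $B$. The theorem then follows, with the additional properties (isolated vertices, $p = 1$, lists equal to $\{1, \dots, k\}$, polynomially bounded weights) all being manifest from the construction.
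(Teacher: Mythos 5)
Your proposal is correct and follows exactly the paper's reduction from {\sc $3$-Partition} (same graph, same parameters, same equivalence), merely filling in the verification details --- in particular the counting argument via $B/4 < a_i < B/2$ --- that the paper leaves implicit as ``easy to see.''
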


As in the case of Theorem \ref{th:NPC-partition}, we can make this instance connected (for instance, obtaining a chain, or a star) by defining $k=n+1$ and adding new vertices (and associated edges) that can only take color $n+1$. Note, however, that in this reduction we need non uniform vertex weights. Actually, this is unavoidable, from the following fact: the problem becomes tractable in graphs with no edges (and hence also in stars, by a simple reduction) when $w(v)=1$ for each $v \in V$. For the sake of completeness, we give a proof of this fact.

\begin{Proposition}\label{prop:isolated-vertices-poly}
If all vertex weights are 1, {\sc WeightedLocallyBoundedListColoring} is polynomial-time solvable in graphs consisting of isolated vertices.
\end{Proposition}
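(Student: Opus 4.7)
The plan is to reduce the problem to a collection of bipartite matching (equivalently, max-flow) feasibility problems, which can be solved in polynomial time by standard algorithms.

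First, I would observe that since $G$ has no edges, every assignment of colors is automatically a proper coloring, so the only requirements left are (i) that each vertex $v$ receives a color from $L(v)$, and (ii) that in each part $V_i$ and for each color $j$, the number of vertices of $V_i$ receiving color $j$ is exactly $W_{ij}$ (using the fact that all weights are $1$). Since no vertex belongs to two parts and no constraint links vertices of different parts, the instance decomposes into $p$ independent subproblems, one per $V_i$.

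Next, I would show that each subproblem is a bipartite $b$-matching problem. For a fixed $i$, build a flow network with a source $s$, a sink $t$, a node for each $v \in V_i$, and a node for each color $j \in \{1, \dots, k\}$. Add arcs $s \to v$ of capacity $1$ for each $v \in V_i$; arcs $v \to j$ of capacity $1$ for each $j \in L(v)$; and arcs $j \to t$ of capacity $W_{ij}$ for each $j \in \{1, \dots, k\}$. Since all capacities are integral and positive, by the integrality of max-flow any feasible flow can be taken to be integral, hence it corresponds to assigning each $v \in V_i$ the color $j$ along which it sends its unit of flow. A feasible list-coloring of $V_i$ with the prescribed class sizes exists if and only if the maximum $s$--$t$ flow in this network equals $|V_i|$ (recall $\sum_j W_{ij} = |V_i|$). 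The global instance is feasible if and only if every such subproblem is.

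Finally, the overall running time is polynomial: for each $i \in \{1, \dots, p\}$ the network has $O(|V_i| + k)$ nodes and $O(|V_i| k)$ arcs, so any polynomial-time max-flow algorithm yields an overall polynomial-time procedure. There is no real obstacle here; the only point that needs to be stated carefully is the decomposition across the parts $V_1, \dots, V_p$, which is what makes the weight-$1$ assumption essential (in the weighted case one would need to route $w(v)$ units out of each $v$, and the integrality of a feasible flow would no longer suffice to split a single vertex cleanly, which is precisely where the hardness of Theorem~\ref{th:NPC-3partition} comes from).
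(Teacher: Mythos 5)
Your proposal is correct and takes essentially the same approach as the paper: the paper builds a single bipartite graph with $W_{hc}$ duplicated unit-capacity color nodes per part and color and asks for a perfect matching (computed as a max flow), which is just the expanded, un-decomposed form of your per-part capacitated flow networks. The argument for the equivalence between integral flows/matchings and feasible list-colorings is the same in both.
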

\begin{proof}
Given an instance in such a graph $G$, we can construct a bipartite graph $H$ as follows: there is a vertex $v_i$ in $H$ for each vertex $v_i$ in $G$, and, for each $h \in \{1, \dots, p\}$ and $c \in \{1, \dots, k\}$, there are $W_{hc}$ vertices $u^{j}_{h,c}$ in $H$. There is an edge between $v_i$ and $u^{j}_{h,c}$ in $H$ if and only if $v_i \in V_h$ and $c \in L(v_i)$: since $\sum_{c=1}^{k} W_{hc} = \sum_{v:v \in V_h} w(v) = \vert V_h \vert$ for each $h \in \{1, \dots, p\}$, this implies that the number of vertices is the same in each side of the bipartition of $H$.

Then, there exists a feasible coloring for the initial {\sc WeightedLocallyBoundedListColoring} instance in $G$ if and only if $H$ admits a perfect matching. ($H$ being bipartite, such a perfect matching can be computed as a maximum flow.) Indeed, we define the following equivalence: vertex $v_i \in V_h$ has color $c$ if and only if the edge $v_i u^{j}_{h,c}$ belongs to this perfect matching, for some $j$.

On the one hand, each vertex $v_i$ will be incident to one edge of the matching, and hence $v_i$ will take one color (by construction, this is a color in $L(v_i)$); on the other hand, each vertex $u^{j}_{h,c}$ will be incident to one edge of the matching, and hence in each $V_h$ there will be exactly $W_{hc}$ vertices with color $c$.
\end{proof}

We can nevertheless prove the strong \textbf{NP}-completeness of {\sc WeightedLocallyBoundedListColoring} in graphs of bounded tree-width when $w(v)=1$ for each $v \in V$, $p=1$ and $k$ is arbitrary, by using a more complex reduction from {\sc $3-$Partition}. Actually, such a reduction was already given in \cite[Theorem 8]{refBodlaender05} for the precolored variant of the equitable coloring problem. In this reduction, each instance is a set of trees of height 3 (together with additional isolated vertices). The leaves are in fact precolored vertices, which are used to restrain the set of possible colors for other vertices. Hence, if we consider list-colorings, we can ignore (i.e., remove) these leaves, and simply define suitable lists of possible colors for the other vertices. This yields instances consisting of trees of height at most 2, and leaves as open the case where each connected component is a tree of height at most 1 (which, unlike trees of height 2, are also cographs).

We now describe an alternative reduction, that needs more colors but only makes use of stars (i.e., trees of height 1), and that will prove useful in Sections \ref{sect:btw} to \ref{sect:split-graphs}. Assume we are given an instance $I=(a_1, \dots, a_{3n}, B)$ of {\sc $3-$Partition} with $n \geq 2$. We construct an instance $I'$ of {\sc WeightedLocallyBoundedListColoring} as follows: we define $p=1$, $k=3n^2+4n$, $w(v)=1$ for each vertex $v$, and the graph, that contains $3n^2(nB+1)+3n$ vertices, consists of $3n^2$ vertex-disjoint stars and $3n$ isolated vertices $u_1, \dots, u_{3n}$. We denote each star by $S_i^j$, for each $i \in \{1, \dots, 3n\}$ and each $j \in \{1, \dots, n\}$, and its central vertex by $v_i^j$. For each $i \in \{1, \dots, 3n\}$ and each $j \in \{1, \dots, n\}$, $v_i^j$ can take any of the $n+1$ colors in $\{n+i, ni+3n+1, ni+3n+2, \dots, ni + 4n\}$, and $S_i^j$ has $3na_i$ leaves, which can only take colors $j$ and $n+i$. Moreover, for each $i \in \{1, \dots, 3n\}$, vertex $u_i$ can take any of the $n$ colors in $\{ni+3n+1, \dots, ni+4n\}$. We also define $W_{1j}=3nB$ for each $j \in \{1, \dots, n\}$, $W_{1(n+i)}=3na_i(n-1)+1$ for each $i \in \{1, \dots, 3n\}$, and $W_{1(ni+3n+h)}=1$ for each $i \in \{1, \dots, 3n\}$ and each $h \in \{1, \dots, n\}$.

Then, the following holds:

\begin{Lm}
$I$ has a solution if and only if $I'$ has a solution.
\end{Lm}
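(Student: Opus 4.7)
The plan is to prove the equivalence in both directions, exploiting the fact that the weight bounds $W_{1(ni+3n+h)}=1$ and the ``$+1$'' appearing in $W_{1(n+i)} = 3na_i(n-1)+1$ essentially force a unique structural form on any feasible coloring. The argument has two largely symmetric halves that both turn on the same counting identities; I describe them in turn.

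For the forward direction, assume $I$ has a partition $A_1,\dots,A_n$ and, for each $i \in \{1,\dots,3n\}$, let $j(i)$ denote the unique index with $a_i \in A_{j(i)}$. I would color $I'$ as follows: in the star $S_i^{j(i)}$, assign color $j(i)$ to all $3na_i$ leaves and color $n+i$ to the center $v_i^{j(i)}$; in every other star $S_i^j$ with $j \neq j(i)$, assign color $n+i$ to all its leaves, leaving its center uncolored for the moment. Finally, for each fixed $i$, bijectively assign the $n-1$ remaining centers $v_i^j$ (those with $j \neq j(i)$) together with the isolated vertex $u_i$ to the $n$ singleton colors $\{ni+3n+1,\dots,ni+4n\}$. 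A short check, tier by tier, confirms properness (center and leaves of each star always receive different colors) and the bounds: using $\sum_{a_i \in A_j} a_i = B$ gives $W_{1j}=3nB$ for $j \leq n$, while the $n-1$ non-selected stars together with the one center $v_i^{j(i)}$ give $W_{1(n+i)} = 3na_i(n-1)+1$, and the bijection gives $W_{1(ni+3n+h)}=1$.

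For the converse, suppose $I'$ admits a feasible coloring. Let $x_{ij}$ be the number of leaves of $S_i^j$ taking color $j$ (so $3na_i - x_{ij}$ leaves of $S_i^j$ take color $n+i$), and let $y_i$ be the number of centers among $v_i^1,\dots,v_i^n$ taking color $n+i$. I would argue in three steps. First, observe that colors in $\{ni+3n+1,\dots,ni+4n\}$ can only be borne by $u_i$ and by the $n - y_i$ centers of row $i$ not colored $n+i$; since each of these $n$ colors has bound $1$, we must have $(n - y_i) + 1 = n$, so $y_i = 1$. Second, counting vertices of color $n+i$ yields $\sum_{j=1}^n (3na_i - x_{ij}) + y_i = 3na_i(n-1)+1$, and combined with $y_i = 1$ this forces $\sum_j x_{ij} = 3na_i$. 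Third, letting $j(i)$ be the unique $j$ with $v_i^{j(i)}$ colored $n+i$, properness of $S_i^{j(i)}$ forbids any of its leaves from taking color $n+i$, so $x_{i,j(i)} = 3na_i$ and hence $x_{ij} = 0$ for $j \neq j(i)$. Defining $A_j = \{a_i : j(i) = j\}$, the bound $W_{1j} = 3nB$ reads $\sum_{a_i \in A_j} 3na_i = 3nB$, i.e., $\sum_{a_i \in A_j} a_i = B$, and the $3$-Partition hypothesis $B/4 < a_i < B/2$ forces $|A_j|=3$.

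The main subtlety, rather than a single hard step, is disentangling the three tiers of colors cleanly: the partition colors $1,\dots,n$ collect leaf weights into blocks, the row colors $n+i$ tie exactly one star per row $i$ to the block its element joins, and the singleton padding colors together with the isolated vertices $u_i$ (through the precise, seemingly odd bound $W_{1(n+i)} = 3na_i(n-1)+1$) force $y_i = 1$ uniquely, which is what makes $j(i)$ well-defined. Dropping either the $u_i$'s or the ``$+1$'' would permit spurious colorings that do not correspond to a $3$-Partition, so pinning down this counting argument is the crux of the reduction.
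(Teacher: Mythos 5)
Your proof is correct and follows essentially the same route as the paper: the same equivalence ($a_i \in A_j$ iff $v_i^j$ takes color $n+i$), the same forward construction, and the same converse counting argument, with the singleton bounds $W_{1(ni+3n+h)}=1$ forcing exactly one center per row to take color $n+i$ and the bound $W_{1(n+i)}=3na_i(n-1)+1$ pinning down the leaf colors. The only (immaterial) difference is in the last step, where you deduce $\sum_{a_i \in A_j} a_i = B$ first and then use $B/4 < a_i < B/2$ to get $\vert A_j \vert = 3$, whereas the paper argues $\vert A_j \vert = 3$ by contradiction first; your formalization via the quantities $x_{ij}$ and $y_i$ is a clean writing of the same argument.
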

\begin{proof}
We shall prove that the following equivalence holds: $a_i \in A_j$ if and only if $v_i^j$ takes color $n+i$, for each $i \in \{1, \dots, 3n\}$ and each $j \in \{1, \dots, n\}$.

Assume that we are given a solution to the {\sc $3-$Partition} instance $I$. For each $i \in \{1, \dots, 3n\}$, if $a_i \in A_j$ for some $j$, then $v_i^{j}$ takes color $n+i$, and the $3na_i$ leaves in $S_i^j$ take color $j$, while, for each $j' \neq j$, $v_i^{j'}$ takes color $ni+3n+j'$, and all the leaves in $S_i^{j'}$ take color $n+i$. Moreover, for each $i \in \{1, \dots, 3n\}$, $u_i$ takes color $ni+3n+j$. This implies, in $I'$, that there are $3na_i(n-1)+1$ vertices of color $n+i$ for each $i \in \{1, \dots, 3n\}$, and there are $3na_{i_1}+3na_{i_2}+3na_{i_3}=3nB$ (assuming $A_j = \{a_{i_1}, a_{i_2}, a_{i_3}\}$) vertices of color $j$ for each $j \in \{1, \dots, n\}$.

Conversely, assume that we are given a solution to the {\sc WeightedLocallyBoundedListColoring} instance $I'$, and let $a_i \in A_j$ if $v_i^j$ takes color $n+i$ for each $i \in \{1, \dots, 3n\}$ and each $j \in \{1, \dots, n\}$. Observe that, for each $i \in \{1, \dots, 3n\}$, exactly one of the $v_i^j$'s takes color $n+i$ (from the fact that $W_{1(ni+3n+h)}=1$ for each $h$, and from the possible colors that can be taken by $u_i$ and the $v_i^j$'s). This means that the leaves of such a $S_i^j$ \emph{must} take color $j$. For each $i \in \{1, \dots, 3n\}$, the leaves of all the other $S_i^j$'s must take color $n+i$, since $W_{1(n+i)}=3na_i(n-1)+1$. Hence, for each $j \in \{1, \dots, n\}$, there are exactly three $i$'s in $\{1, \dots, 3n\}$ such that $v_i^j$ has color $n+i$: if there are at most two such $i$, then $a_i < B/2$ for each $i$ implies that at most $3n(B-1)$ vertices have color $j$, a contradiction; if there are at least four such vertices, then $a_i > B/4$ for each $i$ implies that at least $3n(B+1)$ vertices have color $j$, a contradiction. The sum of these three $a_i$'s is then $W_{1j}/3n=B$, which concludes the proof.
\end{proof}

Hence, we have proved:
\begin{Th}\label{th:NPC-forests}
{\sc WeightedLocallyBoundedListColoring} is \textbf{NP}-complete in the strong sense in star forests, even if $p=1$ and $w(v)=1$ for each $v \in V$.
\end{Th}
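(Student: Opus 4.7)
The plan is to derive Theorem \ref{th:NPC-forests} as an immediate consequence of the preceding lemma, so that the proof reduces to a bookkeeping check of the reduction. First I would note that membership in \textbf{NP} was already recorded at the start of Section \ref{sect:NPC}, so only \textbf{NP}-hardness needs attention. I would then verify that the construction of $I'$ from $I$ is a polynomial-time reduction: because {\sc $3-$Partition} is strongly \textbf{NP}-complete we may assume $B$ is polynomially bounded in $n$, and then the vertex count $3n^2(nB+1)+3n$, the number of colors $k=3n^2+4n$, the size of each list $L(v)$ (at most $n+1$), and the bounds $W_{1c}$ (each at most $3nB$) are all polynomial in the size of $I$, so $I'$ is produced in polynomial time.

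Next I would confirm that the graph $G$ produced has the structural restrictions claimed by the theorem. By construction every connected component of $G$ is either one of the stars $S_i^j$ (a tree of height $1$ with $3na_i$ leaves and center $v_i^j$, which is a star by definition) or one of the isolated vertices $u_i$ (a single-vertex component, which we may view as a trivial star). Hence $G$ is a star forest. The additional assumptions $p=1$ and $w(v)=1$ for every $v\in V$ hold directly by construction.

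Combining these observations with the equivalence proved in the lemma just above, and with the strong \textbf{NP}-completeness of {\sc $3-$Partition}, yields strong \textbf{NP}-completeness of {\sc WeightedLocallyBoundedListColoring} restricted to star forests with unit weights and $p=1$, as claimed. The main obstacle has already been overcome, namely designing the reduction and proving the correspondence between the three elements of each $A_j$ and the three central vertices $v_i^j$ that receive color $n+i$; all that remains for the theorem itself is the routine verification that the parameters $n$, $B$, $k$, and the list sizes do not blow up super-polynomially, which is immediate from the strong \textbf{NP}-completeness of {\sc $3-$Partition}.
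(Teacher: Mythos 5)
Your proposal is correct and matches the paper exactly: the paper also derives Theorem \ref{th:NPC-forests} as an immediate consequence of the preceding lemma (``Hence, we have proved''), with the reduction's polynomiality and the star-forest structure being the only remaining routine checks. One tiny inaccuracy: the bounds $W_{1(n+i)}=3na_i(n-1)+1$ can exceed $3nB$ for $n\geq 3$, but they are still polynomial in the size of $I$, so your conclusion is unaffected.
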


\begin{Cor}\label{th:NPC-forests-cographs}
{\sc WeightedLocallyBoundedListColoring} is strongly \textbf{NP}-complete in cographs of tree-width 1, even if $p=1$ and $w(v)=1$  $\forall v \in V$.
\end{Cor}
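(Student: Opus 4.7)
The plan is to derive this corollary as an immediate consequence of Theorem \ref{th:NPC-forests}, which already furnishes strong \textbf{NP}-completeness in star forests under the restrictions $p=1$ and uniform unit weights. What remains is purely structural: I need to verify that star forests simultaneously have tree-width at most $1$ and are cographs, so that the hardness transfers to the class named in the corollary.

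For the tree-width side, a star forest is a forest, and every forest containing at least one edge has tree-width $1$ (a tree decomposition is obtained by taking one bag per edge); the instances produced in the proof of Theorem \ref{th:NPC-forests} certainly contain edges (the stars $S_i^j$ have $3na_i \geq 1$ leaves), so their tree-width is exactly $1$. For the cograph side, I would argue in two steps. First, each individual star $K_{1,m}$ is a cograph: any induced path can visit the central vertex at most once (since all other vertices are pairwise non-adjacent), so the longest induced path has at most three vertices, ruling out an induced $P_4$. Second, the class of cographs is closed under disjoint union, because an induced $P_4$ in a disconnected graph must lie entirely inside one connected component; hence a disjoint union of stars, i.e.\ a star forest, is a cograph.

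Combining these two observations, the star forests constructed in the reduction of Theorem \ref{th:NPC-forests} belong to the class of cographs of tree-width $1$, and the reduction still uses $p=1$ and unit vertex weights, so the corollary follows. I do not foresee any real obstacle here; the only minor point to be careful about is confirming that the instances in the prior reduction are nontrivial enough (at least one star with at least one leaf) to have tree-width exactly $1$ rather than $0$, which is guaranteed by the assumptions $n \geq 2$ and $a_i \geq 1$ in the {\sc 3-Partition} instance.
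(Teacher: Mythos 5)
Your proposal is correct and matches the paper's (implicit) argument exactly: the corollary is stated as an immediate consequence of Theorem \ref{th:NPC-forests}, justified later in Section \ref{sect:cographs} by the observation that star forests are cographs of tree-width 1. You simply spell out the two structural facts (no induced $P_4$ in a disjoint union of stars; forests with at least one edge have tree-width 1) that the paper takes for granted.
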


Again, we can make the instance $I'$ connected by adding a new color, as well as one or more vertices that can only take this new color. After doing this, we can obtain a tree of height 2 (or a caterpillar), while recall that, from a direct consequence of Proposition \ref{prop:isolated-vertices-poly}, {\sc WeightedLocallyBoundedListColoring} is tractable in stars (i.e., in trees of height 1) when $w(v)=1$ for each $v \in V$.

Moreover, one of the features of the above reduction is that all the $u_i$'s and $v_i^j$'s must take different colors, which will be useful in Section \ref{sect:split-graphs}. However, one can use a simpler reduction, in which $\vert L(v) \vert = 2$ for each vertex $v$. Too see this, remove all the $u_i$'s, and set $k=4n$ and $L(v) = \{j,n+i\}$ for each vertex $v$ in each $S_i^j$. The color bounds are then $W_{1(n+i)}=3na_i(n-1)+1$ for each $i \in \{1, \dots, 3n\}$ and $W_{1j}=3nB+3(n-1)$ for each $j \in \{1, \dots, n\}$. As in the proof of Theorem \ref{th:NPC-forests}, and for similar reasons, we have $a_i \in A_j$ if and only if $v_i^j$ takes color $n+i$ for each $i \in \{1, \dots, 3n\}$ and each $j \in \{1, \dots, n\}$.

Note that the above reductions leave as open the case of chains (which are not cographs, except when they are also stars) where the vertex weights are 1, $p$ is fixed and $k$ is not. The reduction given in \cite[Theorem 3]{refGravier99} closes this gap, and shows that the problem is strongly \textbf{NP}-complete in this case, even when $p=1$ and $\vert L(v) \vert = 2$ for each $v \in V$. However, this reduction is rather complicated, in particular since all the chains do not play the same role. We now give a different kind of reduction to prove this result, in which we have $p=1$ and $\vert L(v) \vert = 2$ for each $v \in V$, as well as an additional restriction on the lists of possible colors:

\begin{Th}\label{th:NPC-linear-forests}
{\sc WeightedLocallyBoundedListColoring} is \textbf{NP}-complete in the strong sense in linear forests, even if $p=1$, $w(v)=1$ and $\vert L(v) \vert = 2$ for each $v \in V$, and all the vertices of each chain share one common possible color.
\end{Th}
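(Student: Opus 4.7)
The plan is to reduce from 3-Partition, which is strongly NP-complete. Given a 3-Partition instance $(a_1,\ldots,a_{3n},B)$ with polynomially bounded $a_i$'s, $\sum_i a_i = nB$ and $B/4 < a_i < B/2$, I would build in polynomial time a linear forest whose valid list-colorings correspond to partitions of the $a_i$'s into $n$ triples of sum $B$. I would use $n$ ``bin'' colors $1,\ldots,n$, $3n$ element-specific colors $c_1,\ldots,c_{3n}$, and one or two auxiliary ``separator'' colors; every vertex would have a list of size exactly two, and the common color of the chain associated with element $i$ would be $c_i$.

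For each element $i$, I would construct a single chain $C_i$ of $\Theta(n a_i)$ vertices, arranged as $n$ consecutive \emph{sections} (one per bin) linked by bridge vertices. Section $j$ of $C_i$ would have $\Theta(a_i)$ vertices with alternating lists of the form $\{c_i,j\}$ and $\{c_i,\mathrm{aux}\}$, designed so that in a proper coloring it admits two extremal configurations: an \emph{active} state contributing $a_i$ to the bin color $j$ (and $0$ to $c_i$), and an \emph{inactive} state contributing $0$ to $j$ and $\Theta(a_i)$ to $c_i$, the $c_i$-vertices being interleaved with auxiliary-color vertices to preserve properness. The bridge vertices between consecutive sections would be used to propagate a ``one-hot'' pattern forcing exactly one active section per chain.

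Setting $W_{1,c_i}$ to the value attained when exactly one section of $C_i$ is active, $W_{1,j}=B$ for each bin $j$, and the auxiliary bounds to their corresponding constants would make the existence of a valid coloring equivalent to a choice of one active bin per element whose assigned $a_i$'s sum to $B$ in each bin. The forward direction, starting from a 3-Partition $A_1, \ldots, A_n$, amounts to activating section $j$ of $C_i$ whenever $a_i\in A_j$ and checking that all bounds are met. The backward direction amounts to reading off the unique active section of each $C_i$ as $j^*(i)$ and verifying that the induced map defines a 3-Partition. Polynomial running time follows from the $a_i$'s being polynomially bounded, so each $C_i$ has polynomially many vertices.

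The main obstacle will be designing the section gadget so that partial configurations are excluded. A chain of length $\Theta(a_i)$ with uniform list $\{c_i,j\}$ can only shift its color counts by $1$ between its two alternating proper colorings, which is too weak to encode the $a_i$-scaling required by the sum constraint of 3-Partition; non-uniform lists make this scaling structurally possible but also admit valid colorings with intermediate per-section bin counts, which would produce ``spurious'' colorings not corresponding to any 3-Partition. The heart of the proof will be to choose the alternating list pattern inside each section, together with the lists on the bridge vertices and the global bounds $W_{1,c_i}$, $W_{1,j}$ and $W_{1,\mathrm{aux}}$, so that the only colorings meeting all bounds simultaneously are those in which every section of every chain is either fully active or fully inactive, and exactly one section is active per chain.
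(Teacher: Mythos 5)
Your high-level strategy matches the paper's: reduce from {\sc $3$-Partition}, give each element $i$ its own ``common'' color, use alternating two-element lists so that a section can be driven into an \emph{active} state (contributing $a_i$ to a bin color) or an \emph{inactive} state (contributing to an auxiliary color), and use the global bounds to force a one-hot selection of bins. However, as written your proposal has a genuine gap, and you identify it yourself in the last paragraph: the entire mechanism that excludes intermediate configurations --- the choice of the alternating list pattern, the bridge-vertex lists, and the bounds that together force every section to be \emph{fully} active or \emph{fully} inactive with exactly one active section per chain --- is left as an unsolved design problem. That mechanism is the whole content of the theorem; without it you have only a statement of the properties a gadget would need, not a reduction. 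Your chosen architecture also makes this harder than necessary: propagating a one-hot signal along a single chain of $n$ sections through bridge vertices, when every list has size two and must contain the common color $c_i$, is exactly the kind of rigid long-range constraint that paths with two-element lists resist (a path's proper colorings under such lists tend to admit many ``mixed'' configurations unless something pins them down globally).

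The paper sidesteps this by exploiting the fact that the statement only requires a \emph{linear forest}, not one chain per element: it uses $3n^2$ short disjoint chains $\mu_i^j$ (one per element--bin pair), each with $2a_i+2$ vertices, every vertex of $\mu_i^j$ having the common color $n+i$ in its list. The exclusion of partial configurations then comes from a saturation argument rather than from local gadgetry: a proper coloring of a $(2a_i+2)$-vertex path can assign color $n+i$ to at most $a_i+1$ vertices, and the bound $W_{1(n+i)}=n(a_i+1)$ forces this maximum to be attained in all $n$ chains for element $i$; combined with a unit-bound endpoint color that exactly one endpoint of $\mu_i^j$ must take, attaining the maximum forces the $n+i$-colored vertices to be precisely the odd-indexed or precisely the even-indexed ones, which rigidly determines the remaining $a_i$ vertices to be all of color $j$ or all of color $4n+i$. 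The bound $W_{1(4n+i)}=(n-1)a_i$ then forces exactly one of the $n$ chains for element $i$ to route its $a_i$ vertices to a bin color. If you want to salvage your single-chain-per-element design, you would still need an argument of comparable force; the counting/saturation idea is the missing ingredient.
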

\begin{proof}
Given a {\sc $3-$Partition} instance, instead of defining $3n^2$ stars as in Theorem \ref{th:NPC-forests}, we define $3n^2$ disjoint chains $\mu_i^j$, for each $i \in \{1, \dots, 3n\}$ and each $j \in \{1, \dots, n\}$. We also define $p=1$, $k=7n+3n^2$, and $w(v)=1$ for each $v \in V$. For each $i \in \{1, \dots, 3n\}$ and each $j \in \{1, \dots, n\}$, $\mu_i^j$ has $2a_i+2$ vertices, and ($i$) the first and the last vertices in $\mu_i^j$ can only take colors $(i-1)n+j+7n$ and $n+i$, ($ii$) the vertices with an odd index different from 1 in $\mu_i^j$ can only take colors $j$ and $n+i$, ($iii$) the vertices with an even index different from $2a_i+2$ in $\mu_i^j$ can only take colors $n+i$ and $4n+i$. Hence, for each vertex $v$ in $\mu_i^j$, we have $n+i \in L(v)$. To end the reduction, we define $W_{1(n+i)}=n(a_i+1)$ and $W_{1(4n+i)}=(n-1)a_i$ for each $i \in \{1, \dots, 3n\}$, $W_{1(7n+j+(i-1)n)}=1$ for each $i \in \{1, \dots, 3n\}$ and each $j \in \{1, \dots, n\}$, and $W_{1j}=B$ for each $j \in \{1, \dots, n\}$.

Note that, for each $i \in \{1, \dots, 3n\}$ and each $j \in \{1, \dots, n\}$, any feasible coloring has exactly one vertex (either the first one or the last one) in $\mu_i^j$ of color $7n+j+(i-1)n$, while there can be at most $a_i+1$ vertices of color $n+i$ in this chain. Since each $W_{1(n+i)}$ is equal to $n(a_i+1)$, this implies that in any feasible coloring there are exactly $a_i+1$ vertices of color $n+i$ in $\mu_i^j$, for each $i \in \{1, \dots, 3n\}$ and each $j \in \{1, \dots, n\}$: the vertices of color $n+i$ in $\mu_i^j$ are either all the vertices with an odd index, or all the vertices with an even index. Hence, in each chain, the $a_i$ remaining vertices either all have color $j$, or all have color $4n+i$. Moreover, since each $W_{1(4n+i)}$ is equal to $(n-1)a_i$, this implies that in any feasible coloring there are exactly $n-1$ integers $j$ in $\{1, \dots, n\}$ such that there are $a_i$ vertices of color $4n+i$ in $\mu_i^j$, for each $i \in \{1, \dots, 3n\}$.

We can thus define the following equivalence between the two instances: for each $i$ and each $j$, $a_i \in A_j$ if and only if there are $a_i$ vertices of color $j$ in $\mu_i^j$ (for each $j$, there will be exactly three such $i$'s, for reasons similar to the ones mentioned in the proof of Theorem \ref{th:NPC-forests}), which concludes the proof.
\end{proof}

We conclude by pointing out that we can turn the previous instance into a chain by adding a new color and new vertices that can only take this new color.

\subsection{When $p$ is not fixed}

Now, we prove the \textbf{NP}-completeness of the case where $k$ is fixed but $p$ is not, by considering the following well-known strongly \textbf{NP}-complete problem \cite{refGarey}:\\

\noindent {\sc MonotoneOneInThreeSAT}\\
\textit{Instance}: A set $X$ of $\nu$ boolean variables $x_1, \dots, x_{\nu}$, and a set of $\mu$ clauses, each one containing exactly three (non negated) boolean variables from $X$.\\
\textit{Question}: Decide whether there exists a truth assignment for the variables in $X$ such that in every clause there is exactly one variable equal to \emph{true}.\\

\begin{Th}\label{th:NPC-forest-1in3SAT}
{\sc WeightedLocallyBoundedListColoring} is \textbf{NP}-complete in the strong sense in star forests, even if $k=2$, and each vertex has weight 1 and can take any color.
\end{Th}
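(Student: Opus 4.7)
The plan is to reduce from \textsc{MonotoneOneInThreeSAT}. Given variables $x_1, \dots, x_\nu$ and clauses $C_1, \dots, C_\mu$, let $n_i$ denote the number of occurrences of $x_i$ across the clauses. I would build one star per variable: for each $i$, create a star $T_i$ with center $c_i$ and $n_i + 1$ leaves, namely one leaf $\ell_{i,h}$ for each of the $n_i$ occurrences of $x_i$ together with one extra leaf $a_i$. Every vertex has weight $1$ and list $\{1,2\}$, and $k=2$. The partition uses $p = \nu + \mu$ parts: the variable-part $P_i = \{c_i, a_i\}$ with bounds $W_{P_i,1} = W_{P_i,2} = 1$, and, for each clause $C_j$ containing variables $x_{i_1}, x_{i_2}, x_{i_3}$, the clause-part $Q_j$ consisting of the three leaves $\ell_{i_1,\cdot}, \ell_{i_2,\cdot}, \ell_{i_3,\cdot}$ corresponding to those occurrences, with bounds $W_{Q_j,1} = 1$ and $W_{Q_j,2} = 2$. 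Every vertex lies in exactly one part, and the required sums match each $\vert V_h \vert$.

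The correctness argument would rest on the structural rigidity of a proper $2$-coloring of a star: the center $c_i$ and all the leaves of $T_i$ (including $a_i$) must take opposite colors, so the common color $\chi_i \in \{1,2\}$ of the leaves is a well-defined boolean encoding of the truth value of $x_i$ (declare $\chi_i = 1$ to mean \emph{true}). The bound on $P_i$ then becomes a tautology, since the adjacency of $c_i$ and $a_i$ already forces them to take different colors. The bound $W_{Q_j,1} = 1$ expresses exactly that precisely one of the three leaves of $Q_j$ has color $1$, i.e., exactly one variable in $C_j$ is true. Valid colorings would therefore correspond bijectively to satisfying one-in-three truth assignments. The construction is polynomial, the resulting graph is a star forest, and the weights are $1$, so strong \textbf{NP}-completeness transfers from \textsc{MonotoneOneInThreeSAT}.

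The main obstacle to anticipate is the tension between two conflicting needs. Forcing all $n_i$ occurrence-leaves of $x_i$ to share one color (to model a consistent truth value) essentially demands that they be leaves of a single star, which then also determines the color of the center; yet the truth value of $x_i$ must remain free in the reduction, which rules out naively placing each $c_i$ in a singleton part, as this would rigidly fix its color. The key trick is to pair each center with its own auxiliary leaf $a_i$ inside $P_i$ under the bounds $(1,1)$: the edge $c_i a_i$ already guarantees these bounds in every proper $2$-coloring, so the variable part imposes no real constraint while still leaving the truth value arbitrary. All the semantic work is then concentrated in the clause parts $Q_j$.
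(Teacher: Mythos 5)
Your construction is exactly the paper's: one star per variable with an extra leaf paired with the center in a part with bounds $(1,1)$ (your $a_i$ is the paper's $u^0_i$), clause parts consisting of the three occurrence leaves with bounds $(1,2)$, and the same structural argument that a proper $2$-coloring of a star forces all leaves to share one color encoding the truth value. The proposal is correct and matches the paper's proof in every essential detail.
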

\begin{proof}
Given an instance of {\sc MonotoneOneInThreeSAT}, we construct an instance of {\sc WeightedLocallyBoundedListColoring} as follows: the graph $G$ consists of $\nu$ vertex-disjoint stars, one star per variable $x_i$. We denote by $v_i$ the central vertex of the $i$th star, and by $u^0_i, \dots, u^{occ(i)}_i$ its leaves, where $occ(i)$ is the number of occurrences of $x_i$ in the set of clauses. Then, we define $k=2$, $p=\nu+\mu$, and, for each vertex $v$, $w(v)=1$ and $L(v)=\{1, 2\}$. The partition of the vertices of $G$ and their target weights are given by: $V_h=\{v_h, u^0_h\}$ and $W_{h1}=W_{h2}=1$ for each $h \in \{1, \dots, \nu\}$, and $V_h=\{u^a_i, u^b_j, u^c_k\}$, $W_{h1}=1$ and $W_{h2}=2$ for each $h \in \{\nu+1, \dots, \nu+\mu\}$, where we consider that the $(h-\nu)$th clause consists of the $a$th occurrence of the variable $x_i$, of the $b$th occurrence of the variable $x_j$, and of the $c$th occurrence of the variable $x_k$.

It is easy to check that we have the following equivalence between the initial {\sc MonotoneOneInThreeSAT} instance and this {\sc WeightedLocallyBoundedListColoring} instance: for each $i \in \{1, \dots, \nu\}$, variable $x_i$ is equal to \emph{true} if and only if $v_i$ has color $2$, which concludes the proof.
\end{proof}

\begin{Cor}\label{th:NPC-forests-cographs-1in3SAT}
{\sc WeightedLocallyBoundedListColoring} is strongly \textbf{NP}-complete in cographs of tree-width 1, even if $k=2$, and each vertex has weight 1 and can take any color.
\end{Cor}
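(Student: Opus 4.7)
The plan is to show that the corollary follows immediately from Theorem \ref{th:NPC-forest-1in3SAT} by observing that the graph $G$ produced in that reduction, namely a disjoint union of stars, is itself a cograph of tree-width 1. No new reduction is needed: the same partition, target weights, and lists of colors witness strong \textbf{NP}-completeness for this more restricted graph class.

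First I would bound the tree-width. Since $G$ is a forest, its tree-width is at most 1, which is enough for the statement. Next I would verify that $G$ is a cograph, in either of two equivalent ways. On the one hand, each star $K_{1,\mathrm{occ}(i)+1}$ is the join of a single vertex (the centre $v_i$) with an independent set (its leaves), and the class of cographs is closed under both join and disjoint union, so the disjoint union of such stars is again a cograph. On the other hand, one can check directly that $G$ contains no induced $P_4$: within a single star the longest induced path has length two (leaf--centre--leaf), and two vertices in different stars are never connected by any path of $G$.

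Combining these two observations with Theorem \ref{th:NPC-forest-1in3SAT} yields the corollary. I do not foresee any genuine obstacle: the argument reduces entirely to two standard closure properties of cographs and the elementary fact that forests have tree-width at most 1.
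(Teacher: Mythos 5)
Your proposal is correct and matches the paper's (implicit) argument exactly: the corollary is stated without a separate proof precisely because the star forests constructed in the reduction of Theorem \ref{th:NPC-forest-1in3SAT} are cographs (no induced $P_4$, or equivalently built from joins and disjoint unions) of tree-width exactly 1 (they are forests with at least one edge). Nothing further is needed.
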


As in the case of Theorem \ref{th:NPC-forests}, we can make the previous instance connected (obtaining a tree of height 2, or a caterpillar) by defining $k=3$ and adding new vertices (and associated edges, to link the stars together) that can only take color 3 (for instance, we can define $p=\nu+\mu+1$ and $W_{p3} = \vert V_p \vert$, and include all these new vertices of weight 1 in the new $V_p$, keeping the other $V_h$'s unchanged).

Again, this reduction, that will prove useful both in Sections \ref{sect:btw} and \ref{sect:cographs} (thanks to Corollary \ref{th:NPC-forests-cographs-1in3SAT}), leaves as open the case of chains where $w(v)=1$ for each $v \in V$, $k$ is fixed and $p$ is not. The next theorem closes this gap.

\begin{Th}\label{th:NPC-linear-forest-1in3SAT}
{\sc WeightedLocallyBoundedListColoring} is \textbf{NP}-complete in the strong sense in linear forests, even if $k=2$, and each vertex has weight 1 and can take any color.
\end{Th}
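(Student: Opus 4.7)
The plan is to reduce from {\sc MonotoneOneInThreeSAT}, exploiting the fact that in a linear forest each connected component (a chain) has exactly two proper 2-colorings, and so naturally encodes a single binary choice that can represent the truth value of a variable.

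First I would introduce, for each variable $x_i$, a chain $C_i$ on $2\,occ(i)$ vertices $v_i^1, v_i^2, \dots, v_i^{2\,occ(i)}$, where $occ(i)$ is the number of occurrences of $x_i$ in the clauses (assumed w.l.o.g.\ to be at least $1$). I set $k=2$, $w(v)=1$, and $L(v)=\{1,2\}$ for every vertex. The odd-indexed vertices of $C_i$ serve as \emph{occurrence vertices}, one per clause occurrence of $x_i$: in any proper 2-coloring of $C_i$ they all share a common color, which I interpret as the truth value of $x_i$ (say color $2$ = \emph{true}); the even-indexed \emph{filler vertices} then take the opposite color. For each clause $c_h$ involving the $a$-th, $b$-th, $c$-th occurrences of three variables $x_i, x_j, x_k$, I define $V_h = \{v_i^{2a-1}, v_j^{2b-1}, v_k^{2c-1}\}$ with $W_{h1}=2$ and $W_{h2}=1$, thereby encoding the ``exactly one true'' constraint. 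The filler vertices, not yet covered by the partition, are all placed in a single final part $V_{\mu+1}$ with $W_{(\mu+1)1}=\mu$ and $W_{(\mu+1)2}=2\mu$, so that $p=\mu+1$.

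For the forward direction, I would color each $C_i$ so that its occurrence vertices have color $2$ iff $x_i$ is true: the clause constraints follow from the one-in-three property, and since any satisfying assignment yields $\sum_{x_i\text{ true}} occ(i)=\mu$, the filler part's weights are also met. Conversely, any feasible coloring recovers an assignment from the color of $v_i^1$, and each $V_h$ constraint for $h\leq\mu$ directly encodes that exactly one variable per clause is true. The main difficulty, compared with the star-forest reduction of Theorem \ref{th:NPC-forest-1in3SAT}, is that a chain offers no ``independent'' vertex analogous to the star's center: all $occ(i)$ occurrence vertices of $C_i$ are rigidly coupled to each other (and oppositely to the fillers). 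The trick resolving this is that $2\,occ(i)$ is even, so each chain globally contributes $occ(i)$ vertices of each color regardless of which of its two colorings is chosen; hence the catch-all filler part with pre-computed weights is automatically consistent and does not over-constrain the reduction. As in previous results, the instance can be made connected by introducing new auxiliary vertices restricted to a single new color.
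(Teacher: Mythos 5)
Your reduction is correct and is essentially the paper's: one chain per variable whose two proper $2$-colorings rigidly encode its truth value, with each clause becoming a three-vertex part whose weights $(2,1)$ force exactly one true occurrence. The only difference is bookkeeping for the complementary (even-indexed) vertices — the paper pairs each odd-indexed vertex with a fresh isolated vertex in a part with weights $(1,1)$, whereas you lump all fillers into one part with weights $(\mu,2\mu)$, correctly justified by the identity $\sum_{x_i\ \mathrm{true}} occ(i)=\mu$ for any one-in-three assignment.
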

\begin{proof}
Given a {\sc MonotoneOneInThreeSAT} instance, we define a {\sc WeightedLocallyBoundedListColoring} instance as follows: the graph $G$ consists of $\nu$ vertex-disjoint chains (one per variable $x_i$) and of isolated vertices. The vertices of the $i$th chain are $v_i^1, u_i^1, \dots, v_i^{occ(i)}, u_i^{occ(i)}$ (in this order), where $occ(i)$ is defined as in the proof of Theorem \ref{th:NPC-forest-1in3SAT}. For each chain, we also add $occ(i)$ isolated vertices ${v'}_{i}^{1}, \dots, {v'}_{i}^{occ(i)}$. Then, we define $k=2$, $p=\mu+\sum_{i=1}^{\nu} occ(i)$, and, for each vertex $v$, $w(v)=1$ and $L(v)=\{1, 2\}$. It remains to define the partition of the vertices of $G$ and their target weights. We have $V_h=\{u^a_i, u^b_j, u^c_k\}$, $W_{h1}=1$ and $W_{h2}=2$ for each $h \in \{1, \dots, \mu\}$, where we consider that the $h$th clause consists of the $a$th occurrence of the variable $x_i$, of the $b$th occurrence of the variable $x_j$, and of the $c$th occurrence of the variable $x_k$. For each $i \in \{1, \dots, \nu\}$ and each $l \in \{1, \dots, occ(i)\}$, we define $V_h=\{v_i^l, {v'}_i^l\}$ and $W_{h1}=W_{h2}=1$, where $h = l + \mu + \sum_{j=1}^{i-1} occ(j)$.

As in the proof of Theorem \ref{th:NPC-forest-1in3SAT}, for each $i \in \{1, \dots, \nu\}$, variable $x_i$ is equal to \emph{true} if and only if $v_i^1$ has color $2$. Indeed, for each $i$, all the $v_i^j$'s must have the same color, and all the $u_i^j$'s must have the same color (the ${v'}_i^j$'s are only useful to ensure that $v_i^1$ can freely take color 1 or 2). Theorem \ref{th:NPC-linear-forest-1in3SAT} follows.
\end{proof}

As in the case of Theorem \ref{th:NPC-linear-forests}, we can make the previous instance connected (obtaining a chain) by defining $k=3$ and adding new vertices (and associated edges) that can only take color 3. Also note that, in the reductions used to prove Theorems \ref{th:NPC-forest-1in3SAT} and \ref{th:NPC-linear-forest-1in3SAT}, we only need colorings (not list-colorings), although we did need to make use of list-colorings in the proofs of Theorems \ref{th:NPC-forests} and \ref{th:NPC-linear-forests}.

\section{An algorithm for graphs of bounded tree-width}\label{sect:btw}

This section deals with {\sc WeightedLocallyBoundedListColoring} in partial $k$-trees. Given a graph $G=(V,E)$, a \emph{tree decomposition} of $G$ is a pair $\big(\{X_i | i \in I\},T\big)$ where $X_i \subseteq V, \forall i \in I$, is a \emph{bag} of vertices of $G$, $T=(I,F)$ is a tree, and:
\begin{enumerate}
 \item[(1)] $\bigcup_{i \in I} X_i = V$,
 \item[(2)] For every edge $uv \in E$, there is an $i \in I$ such that $u,v \in X_i$,
 \item[(3)] For all $i,j,l \in I$, if $j$ lies in the path between $i$ and $l$, then $X_i \cap X_l \subseteq X_j$.
\end{enumerate}

The \emph{width} of a given tree decomposition of a graph $G$ is equal to $\max_{i \in I}\vert X_i \vert - 1$. The \emph{tree-width} of a graph $G$, denoted by $tw(G)$, is the minimum width of a tree decomposition of $G$, taken over all tree decompositions of $G$. Note that trees (and hence chains and stars) have tree-width 1. Without loss of generality, we can also assume that the tree decomposition is \emph{nice} \cite{refKloks94}, i.e.:
\begin{itemize}
 \item $T$ is rooted at some node $r$,
 \item $T$ is binary and has $O(\vert V \vert)$ nodes,
 \item If a node $i$ has two children $j$ and $k$ then $X_i=X_j=X_k$\hfill (join node)
 \item If a node $i$ has one child $j$, then either
 \begin{enumerate}
  \item[(a)] $\vert X_i\vert = \vert X_j\vert -1$ and $X_i \subset X_j$\hfill (forget node)
  \item[(b)] $\vert X_i\vert = \vert X_j\vert +1$ and $X_j \subset X_i$\hfill (introduce node)
 \end{enumerate}
\end{itemize}
Given two vertices $i,j$ of $T$, we will use the notation $j\succeq i$ to denote the fact that $j$ is either $i$ or a descendant of $i$ with respect to $r$. Given a node $i \in I$, let $Y_i = E \cap (X_i \times X_i)$, i.e., $Y_i$ is the subset of $E$ induced by the vertices in $X_i$. Moreover, let $T_i = \bigcup_{j \succeq i} X_j$ and let $G[T_i]$ be the subgraph of $G$ induced by $T_i$.\\

In order to design a standard dynamic programming algorithm that solves {\sc WeightedLocallyBoundedListColoring}, we use the following function $f$:
\begin{itemize}
\item $f(i, c_i, \omega_{11}, \dots, \omega_{1k}, \dots, \omega_{p1}, \dots, \omega_{pk})$ = \emph{true} if there exists a list-coloring of $G[T_i]$ where each vertex $u \in X_i$ has color $c_i(u)$ and where the total weight of vertices of $G[T_i]$ having color $c$ in $V_h$ is $\omega_{hc}$, and \emph{false} otherwise.
\end{itemize}

To describe our algorithm, we now simply need to write down the induction equations defining the values of $f(\cdot)$, for each type of nodes of $T$. Then, these values will be computed in a bottom-up fashion, starting from the leaves of $T$.

Note that, by the definition of tree decompositions, any two vertices linked by an edge in $G$ must both belong to at least one common bag of the tree decomposition (Condition (2)). This implies that a (list-)coloring that is proper in each bag is also proper in the whole graph $G$, provided that each vertex has the same color in each bag it belongs to. Moreover, Condition (3) ensures that the subgraph of $T$ induced by the bags any given vertex belongs to is connected, and hence in all these bags this vertex will have the same color if we do not change its color whenever we move (in $T$) from one bag to an adjacent one.

\paragraph*{If $i$ is a forget node.}
Let $j$ denote its child such that $X_j=X_i \cup \{v\}$:
\begin{equation*}
f(i, c_i, \omega_{11}, \dots, \omega_{pk})=\bigvee_{c_j:(c_j(u) = c_i(u)~\forall u \in X_i) \wedge (c_j(v)\in L(v))} f(j, c_j, \omega_{11}, \dots, \omega_{pk})
\end{equation*}

\paragraph*{If $i$ is an introduce node.}
Let $j$ denote its child such that $X_j=X_i \setminus \{v\}$, and assume that $v \in V_h$:
\begin{equation*}
f(i, c_i, \omega_{11}, \dots, \omega_{pk}) = \mathcal{E} \bigwedge f(j, c_j, \omega_{11}, \dots, \omega_{hc_i(v)}-w(v), \dots, \omega_{pk})\\
\end{equation*}
where $\mathcal{E} = (c_i(v) \in L(v)) \bigwedge (c_i(v) \neq c_i(u)~\forall u:uv \in Y_i) \bigwedge (c_j(u)=c_i(u)~\forall u \in X_j) \bigwedge (\omega_{hc_i(v)} \geq w(v))$.\\

\paragraph*{If $i$ is a join node.}
Let $j$ and $l$ denote its two children, and let $w^i_{hc}$ be the total weight of vertices $v \in X_i \cap V_h$ such that $c_i(v)=c$, for each $h$ and $c$.
\begin{equation*}
  f(i, c_i, \omega_{11}, \dots, \omega_{pk})=\bigvee_{(q_{11},\dots,q_{pk}) \in Q^i}\big(f(j, c_i, \omega_{11}+w^i_{11}-q_{11}, \dots, \omega_{pk}+w^i_{pk}-q_{pk}) \wedge f(l, c_i, q_{11}, \dots, q_{pk})\big)
\end{equation*}
where $Q^i=\{(q_{11},\dots,q_{pk}):w^i_{hc} \leq q_{hc}\leq \omega_{hc}~\forall h~\forall c\}$.\\

\paragraph*{If $i$ is a leaf node.}
In this case, we just have to check that the coloring function $c_i$ provides a valid locally bounded list-coloring of $G[X_i]$.
\begin{equation*}
f(i, c_i, \omega_{11}, \dots, \omega_{pk})= \mathcal{E} \bigwedge (w(\{v\in X_i\cap V_h:c_i(v)=c\}) = \omega_{hc}~\forall h~\forall c)
\end{equation*}
where $\mathcal{E}=(c_i(v) \in L(v)~\forall v \in X_i) \bigwedge (c_i(v) \neq c_i(u)~\forall u,v \in X_i:uv \in Y_i)$.

\paragraph*{Root value.} The answer \emph{true} or \emph{false} for the initial instance is obtained at the root $r$ by computing the following value:
\begin{equation*}
\bigvee_{c_r:c_r(u)\in L(u)~\forall u \in X_r} f(r, c_r, W_{11}, \dots, W_{pk})
\end{equation*}

\begin{Lm}\label{lem:algo-btw}
The values of $f(i, c_i, \omega_{11}, \dots, \omega_{1k}, \dots, \omega_{p1}, \dots, \omega_{pk})$ computed by the above algorithm are correct.
\end{Lm}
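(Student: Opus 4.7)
The plan is a standard structural induction on the rooted nice tree decomposition $T$, processed bottom-up. For each node $i$ and each fixed tuple $(c_i, \omega_{11}, \dots, \omega_{pk})$, I would prove both directions of the equivalence: the equation defining $f(i, \cdot)$ returns \emph{true} if and only if $G[T_i]$ admits a proper list-coloring in which every $u \in X_i$ receives color $c_i(u)$ and the total weight of color-$c$ vertices in $V_h \cap T_i$ equals $\omega_{hc}$ for all $h$ and $c$. The base case (leaf node, where $T_i = X_i$) is immediate from the definition, since the formula simply checks propriety of $c_i$ on $Y_i$ and that the weight profile matches.

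For the forget node, where $T_i = T_j$, the only difference between the two problems is that the vertex $v \in X_j \setminus X_i$ is no longer assigned a specific color by $c_i$, so the valid colorings of $G[T_i]$ extending $c_i$ are exactly those colorings of $G[T_j]$ extending some $c_j$ that agrees with $c_i$ on $X_i$ and satisfies $c_j(v) \in L(v)$; this matches the disjunction in the induction equation. For the introduce node, the crucial point I would invoke is that every neighbor of the newly introduced vertex $v$ in $G[T_i]$ already lies in $X_i$, which follows from conditions (2) and (3) of tree decompositions applied to $v$ and the node $j$ (since $v \notin T_j$). This reduces the propriety check at $v$ to the edges of $Y_i$, and the weight $w(v)$ is subtracted from $\omega_{hc_i(v)}$ before recursing on $j$, which is well-defined precisely when $\omega_{hc_i(v)} \geq w(v)$.

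The main obstacle is the join node. Here I would first establish that $T_j \cap T_l = X_i$ as a direct consequence of condition (3) applied to the path in $T$ through $i$, so that $G[T_i]$ decomposes as $G[T_j] \cup G[T_l]$ glued along the shared bag $X_i$; in particular, every edge of $G[T_i]$ lies entirely in $G[T_j]$ or entirely in $G[T_l]$, since any edge joining $T_j \setminus X_i$ to $T_l \setminus X_i$ would contradict condition (3). Consequently, a valid coloring of $G[T_i]$ restricts to valid colorings of the two subtrees agreeing on $X_i$, and conversely any two such restrictions glue into a valid coloring of $G[T_i]$. The weight accounting requires care because vertices of $X_i$ contribute to both subtree-restrictions: if $q_{hc}$ denotes the weight of color $c$ in $V_h \cap T_l$, then the corresponding weight in $V_h \cap T_j$ must equal $\omega_{hc} + w^i_{hc} - q_{hc}$, and the inequalities $w^i_{hc} \leq q_{hc} \leq \omega_{hc}$ express that each subtree-weight must include the shared contribution $w^i_{hc}$ and cannot exceed the total $\omega_{hc}$. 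This is exactly what the join equation enumerates. Combining these four cases at the root $r$, where $T_r = V$ by condition (1), yields the lemma.
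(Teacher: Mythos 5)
Your proposal is correct and follows essentially the same route as the paper's proof: a bottom-up structural induction over the nice tree decomposition with a case analysis for leaf, forget, introduce, and join nodes, using the standard consequences of conditions (2) and (3) to reduce propriety checks to the bags and to justify the weight-splitting $q_{hc}+q'_{hc}=\omega_{hc}+w^i_{hc}$ at join nodes. Your treatment is in fact slightly more explicit than the paper's (which delegates the locality of edges and the consistency of colors across bags to a preliminary remark stated before the induction equations), but the argument is the same.
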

\begin{proof}
In order to show that a vertex list-coloring is proper in $T_i$, first notice that, from our preliminary remark, it suffices to show that ($i$) it is a proper list-coloring in $X_i$ and ($ii$) the color of any vertex remains the same when moving from one bag of $T_i$ to an adjacent one. We now show the correctness of the above equations by considering each possible node type for $i$.

Assume $i$ is a forget node. Then, since $T_i = T_j$, $f(i, c_i, \omega_{11}, \dots, \omega_{pk})$ is true if and only if $f(j, c_j, \omega_{11}, \dots, \omega_{pk})$ is true for some coloring $c_j$, such that $c_j(u) = c_i(u)$ for each $u \in X_i$ (each vertex must keep its color when moving from $X_j$ to $X_i$) and vertex $v \in X_j$ takes some color in $L(v)$.

Assume $i$ is an introduce node. Then, $f(i, c_i, \omega_{11}, \dots, \omega_{pk})$ is true if and only if $f(j, c_j, \omega_{11}, \dots, \omega_{hc_i(v)}-w(v), \dots, \omega_{pk})$ is true ($v \in V_h$ has weight $w(v)$ and color $c_i(v)$, so the total weight of vertices of color $c_i(v)$ in $V_h$ is $\omega_{hc_i(v)}-w(v) \geq 0$ in $T_j$), each vertex keeps its color when moving from $X_j$ to $X_i$, and the color of $v$ defines a valid list-coloring in the subgraph of $G$ induced by $X_i$ (i.e., $c_i(v) \in L(v)$ and $c_i(v) \neq c_i(u)$ for each $u$ such that $uv \in Y_i$).

Assume $i$ is a join node. Then, $f(i, c_i, \omega_{11}, \dots, \omega_{pk})$ is true if and only if both $f(j, c_i, q'_{11}, \dots, q'_{pk})$ and $f(l, c_i, q_{11}, \dots, q_{pk})$ are true for some $q_{11}, \dots, q_{pk}$ and $q'_{11}, \dots, q'_{pk}$ such that $q_{hc}+q'_{hc}=\omega_{hc}+w^i_{hc}$ for each $h$ and $c$ (with obviously $w^i_{hc} \leq q_{hc} \leq \omega_{hc}$ and $w^i_{hc} \leq q'_{hc} \leq \omega_{hc}$ for each $h$ and $c$), since the weights of the vertices in $X_i$ are counted twice, i.e., both in $T_j$ and in $T_l$ (and any other vertex weight in $T_j \cup T_l$ is counted only once).

Assume $i$ is a leaf node. Then, $f(i, c_i, \omega_{11}, \dots, \omega_{pk})$ is true if and only if the coloring function $c_i$ provides a valid locally bounded list-coloring of $G[X_i]$.

Finally, the root value is obtained by requiring that $c_r$ is a list-coloring in the subgraph of $G$ induced by $X_r$, which concludes the proof.
\end{proof}

\paragraph*{Running time.}
Let $W_{\max}=\max_{h,c} W_{hc}$ (with $W_{\max} \leq n \max_{v \in V} w(v)$). The running time for a given node of $T$, that depends on its type, is given by:
\\
\begin{center}
 \begin{tabular}{|l|l|}
\hline
 node type & running time\\
\hline
forget & $O(tw(G)+k)$\\
introduce & $O(tw(G)+k)$\\
join & $O(W_{\max}^{pk})$ \\
leaf &$O(tw(G)(tw(G)+pk))$\\
\hline
\end{tabular}
\end{center}
~\\
~\\
There are $O(n)$ nodes in $T$, $O(k ^{tw(G)+1})$ possible colorings of any given bag, and $O(W_{\max}^{pk})$ possible $pk$-tuples $\omega_{11}, \dots, \omega_{pk}$, so when running the algorithm we have $O(n W_{\max}^{pk} k ^{tw(G)+1})$ values $f(\cdot)$ to compute. Since computing the optimal value only takes $O(k ^{tw(G)+1})$ time, the overall running time is $O(n W_{\max}^{pk} k ^{tw(G)+1}\\(W_{\max}^{pk}+tw(G)(tw(G)+pk)))$. Together with Lemma \ref{lem:algo-btw}, this implies:

\begin{Th}\label{th:algo-btw}
In graphs of bounded tree-width, {\sc WeightedLocallyBoundedListColoring} can be solved:
\begin{itemize}
\item in pseudopolynomial time when $p$ and $k$ are fixed,
\item in polynomial time when (i) $p$ and $k$ are fixed, and (ii) all vertex weights are polynomially bounded.
\end{itemize}
\end{Th}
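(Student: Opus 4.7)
The plan is to invoke the dynamic programming algorithm presented above, whose correctness is guaranteed by Lemma~\ref{lem:algo-btw}, and then carry out a careful running-time analysis. First, I would compute a nice tree decomposition of $G$ of width $tw(G)$ in linear time (possible since $tw(G)$ is bounded), obtaining a rooted binary tree $T$ with $O(n)$ nodes. At each node $i$, a DP entry $f(i,c_i,\omega_{11},\dots,\omega_{pk})$ is indexed by a coloring $c_i\colon X_i\to\{1,\dots,k\}$ (at most $k^{tw(G)+1}$ choices) and a weight tuple (at most $O(W_{\max}^{pk})$ choices, where $W_{\max}=\max_{h,c}W_{hc}\le n\max_{v\in V}w(v)$), so there are $O(n\,W_{\max}^{pk}\,k^{tw(G)+1})$ entries in total.

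Next, I would justify the per-entry costs recorded in the table. For a forget or introduce node, the recurrence performs a single child-lookup plus an $O(tw(G)+k)$ consistency check on $c_i$ and the lists $L(\cdot)$; a leaf node costs $O(tw(G)(tw(G)+pk))$, for checking that $c_i$ is a proper list-coloring of $G[X_i]$ and that the induced weight counts equal the $\omega_{hc}$'s. The bottleneck is the join node, which iterates over all admissible splits $(q_{11},\dots,q_{pk})\in Q^i$ and therefore costs $O(W_{\max}^{pk})$ per entry. Combining, the total running time is $O\!\big(n\,W_{\max}^{pk}\,k^{tw(G)+1}\,(W_{\max}^{pk}+tw(G)(tw(G)+pk))\big)$, plus an $O(k^{tw(G)+1})$ final disjunction at the root.

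Finally, to obtain both items of the theorem, I would substitute the two parameter regimes into this bound. When $p$, $k$ and $tw(G)$ are fixed, $k^{tw(G)+1}$ is constant and the whole expression is polynomial in $n$ and in $W_{\max}$, giving the pseudopolynomial bound; if moreover the vertex weights are polynomially bounded in $n$, then $W_{\max}\le n\max_v w(v)$ is itself polynomial in the input size, and the running time becomes polynomial. The only subtlety I would verify with particular care, and the most likely source of error if one tried to reprove the bound afresh, is the join-node recurrence: one must subtract $w^i_{hc}$ from the split variables $q_{hc}$ to avoid double-counting the weights of vertices in $X_i$ (which appear in the subtrees of both children). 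This has already been correctly handled both in the recurrence and in Lemma~\ref{lem:algo-btw}, so no further difficulty remains.
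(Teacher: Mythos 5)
Your proposal is correct and follows essentially the same route as the paper: it invokes the dynamic programming algorithm over a nice tree decomposition, relies on Lemma~\ref{lem:algo-btw} for correctness, and derives the same running-time bound $O(n\,W_{\max}^{pk}\,k^{tw(G)+1}(W_{\max}^{pk}+tw(G)(tw(G)+pk)))$ before specializing to the two parameter regimes. Your remark about subtracting $w^i_{hc}$ at join nodes to avoid double-counting matches exactly the point the paper makes in its proof of Lemma~\ref{lem:algo-btw}.
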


Observe that these results are best possible, in the sense that, from Theorems \ref{th:NPC-partition} to \ref{th:NPC-linear-forest-1in3SAT}, dropping any of the assumptions (on $p$, $k$, and the vertex weights) leads to \textbf{NP}-completeness. They also generalize the results in \cite{refBentz09,refGravier02,refJansen}.

We close this section by mentioning that this approach can be adapted to solve an optimization version of {\sc WeightedLocallyBoundedListColoring}.

More precisely, one can associate a profit function $\pi: V \times \{1, \dots, k\} \rightarrow \mathbf Z$ to the vertices of $G$; the profit of a vertex then depends on its color. By slightly modifying the above dynamic programming algorithm, one can compute a valid weighted and locally bounded list-coloring of maximum (or minimum) profit (if any). In this case, the value of $f(\cdot)$ is no longer equal to \emph{true} or \emph{false}, i.e.:
\begin{itemize}
\item $f(i, c_i, \omega_{11}, \dots, \omega_{1k}, \dots, \omega_{p1}, \dots, \omega_{pk})$ = the maximum total profit of a list-coloring of $G[T_i]$ where each vertex $u \in X_i$ has color $c_i(u)$ and where the total weight of vertices of $G[T_i]$ having color $c$ in $V_h$ is $\omega_{hc}$ (if any).
\end{itemize}

In order to compute the value of this ``new'' function $f(\cdot)$, we must make some changes in the equations. We provide them without proofs, as the arguments are quite similar to the ones used in the proof of Lemma \ref{lem:algo-btw} (note that, by convention, infeasible solutions will have a value of $- \infty$, as we maximize $f(\cdot)$):
\begin{itemize}
\item (in forget nodes and root value) ``$\bigvee$'' becomes ``$\max$''.
\item (in introduce nodes) $f(i, c_i, \omega_{11}, \dots, \omega_{1k}, \dots, \omega_{pk})$ is equal to $- \infty$ if $\mathcal{E}$ is \emph{false}, and to $f(j, c_j, \omega_{11}, \dots, \omega_{hc_i(v)}-w(v), \dots, \omega_{pk})+\pi(v,c_i(v))$ otherwise.
\item (in join nodes) ``$\bigvee$'' becomes ``$\max$'', ``$\wedge$'' becomes ``$+$'', and we add the value ``$-\sum_{v \in X_i} \pi(v,c_i(v))$'' at the end of the line.
\item (in leaf nodes) $f(i, c_i, \omega_{11}, \dots, \omega_{1k}, \dots, \omega_{pk})$ is equal to $\sum_{v \in X_i} \pi(v,c_i(v))$ if both $\mathcal{E}$ and the condition ``$w(\{v\in X_i\cap V_h:c_i(v)=c\}) = \omega_{hc}~\forall h~\forall c$'' are \emph{true}, and to $- \infty$ otherwise.
\end{itemize}

\section{Locally bounded list-colorings in cographs}\label{sect:cographs}

In this section, we study the tractability of {\sc WeightedLocallyBoundedListColoring} in cographs. Cographs, as defined in Section \ref{sec:Intro}, can be characterized in several ways. For instance, a graph $G=(V,E)$ is a cograph if and only if it can be associated with a \emph{cotree} $T$: the leaves of $T$ are the vertices of $G$, and the internal nodes of $T$ are either \emph{union nodes} or \emph{join nodes}. A subtree of $T$ having a \emph{union node} as a root corresponds to the disjoint union of the subgraphs of $G$ associated with the children of this node, and a subtree of $T$ having a \emph{join node} as a root corresponds to the complete union of the subgraphs of $G$ associated with the children of this node (i.e., we add an edge between every pair of vertices with one vertex in each subgraph). Moreover, this cotree can easily be transformed in linear time into a binary cotree with $O(\vert V \vert)$ nodes \cite{refBrandstadt}.

First note that {\sc WeightedLocallyBoundedListColoring} is still \textbf{NP}-complete in cographs, even when $k$ is arbitrary, $p=1$, and each vertex has weight 1. Indeed, on the one hand, it was proved in \cite{refJansen} that the list-coloring problem is \textbf{NP}-complete in complete bipartite graphs (which are cographs), when $k$ is not fixed. On the other hand, the bounded coloring problem was proved to be \textbf{NP}-complete in cographs in \cite{refBodlaenderJansen95}, by a reduction from bin packing (which, by \cite{refJansenMarx13}, also shows its \textbf{W[1]}-hardness with respect to $k$ in cographs).

However, the instances used in \cite{refBodlaenderJansen95,refJansen} have a large tree-width (as otherwise the list-coloring and bounded coloring problems are tractable \cite{refBodlaender05,refJansen}): since star forests and isolated vertices are cographs, Corollary \ref{th:NPC-forests-cographs} shows that this remains true even when the tree-width is 1 (the case of tree-width 0 and polynomially bounded vertex weights being covered by Theorem \ref{th:NPC-3partition}). Corollary \ref{th:NPC-forests-cographs-1in3SAT} shows that, under the same assumption of being a cograph of tree-width 1 (the case of tree-width 0 will be discussed later in this section), this is also true as soon as $k=2$ (provided that $p$ is arbitrary), even when each vertex has weight 1 and can take any color. Finally, as in the case of partial $k$-trees, Theorem \ref{th:NPC-partition} shows that allowing arbitrary vertex weights leads to weak \textbf{NP}-completeness in cographs of tree-width 0, even if $k=2$, $p=1$, and any vertex can take any color.

Moreover, the instances from the previous reductions can be made connected by adding a new vertex, adjacent to all the other vertices, that must take a new color (this increases the tree-width by 1). In particular, when the graph in the reduction consisted of isolated vertices, it then becomes a star.

However, when both $p$ and $k$ are fixed, we can design an efficient dynamic programming algorithm, based on standard techniques, to solve {\sc WeightedLocallyBoundedListColoring} in cographs, by using the associated (binary) cotrees. In order to describe this algorithm, we define the following function:
\begin{itemize}
\item $f'(i, \omega_{11}, \dots, \omega_{1k}, \dots, \omega_{p1}, \dots, \omega_{pk})$ = \emph{true} if there exists a list-coloring of the subgraph of $G$ induced by the leaves of the subgraph of $T$ rooted at node $i$, where the total weight of vertices of this induced subgraph of $G$ having color $c$ in $V_h$ is $\omega_{hc}$, and \emph{false} otherwise.
\end{itemize}

The value of each $f'(\cdot)$ is then computed in a bottom-up fashion, as follows.

\paragraph*{If $i$ is a join node.}
Let $j$ and $l$ denote its two children.
\begin{equation*}
  f'(i, \omega_{11}, \dots, \omega_{pk})=\bigvee_{(q_{11},\dots,q_{pk}) \in Q_{join}^i}\big(f'(j, \omega_{11}-q_{11}, \dots, \omega_{pk}-q_{pk}) \wedge f'(l, q_{11}, \dots, q_{pk})\big)
\end{equation*}
where $Q_{join}^i=\{(q_{11},\dots,q_{pk}):0 \leq q_{hc}\leq \omega_{hc}~\forall h~\forall c\} \cap \{(q_{11},\dots,q_{pk}): (\sum_{h=1}^p q_{hc}) \times (\sum_{h=1}^p (\omega_{hc}-q_{hc}))=0 ~\forall c\}$.

\paragraph*{If $i$ is a union node.}
Let $j$ and $l$ denote its two children.
\begin{equation*}
  f'(i, \omega_{11}, \dots, \omega_{pk})=\bigvee_{(q_{11},\dots,q_{pk}) \in Q_{union}^i}\big(f'(j, \omega_{11}-q_{11}, \dots, \omega_{pk}-q_{pk}) \wedge f'(l, q_{11}, \dots, q_{pk})\big)
\end{equation*}
where $Q_{union}^i=\{(q_{11},\dots,q_{pk}):0 \leq q_{hc}\leq \omega_{hc}~\forall h~\forall c\}$.

\paragraph*{If $i$ is a leaf node.}
Let $v_i \in V_h$ be the vertex of $G$ corresponding to this leaf.
\begin{equation*}
  f'(i, \omega_{11}, \dots, \omega_{pk})=\bigvee_{j \in L(v_i)} \big(\omega_{hj}=w(v_i) \wedge (\omega_{h'j'}=0~\forall (h',j') \neq (h,j))\big)
\end{equation*}

\begin{Lm}\label{lem:algo-cographs}
The values of $f'(i, \omega_{11}, \dots, \omega_{1k}, \dots, \omega_{p1}, \dots, \omega_{pk})$ computed by the above algorithm are correct.
\end{Lm}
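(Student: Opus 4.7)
The plan is to prove Lemma \ref{lem:algo-cographs} by structural induction on the subtree of the cotree $T$ rooted at $i$, processing the nodes bottom-up. The induction hypothesis will be that for every strict descendant $j$ of $i$, the value $f'(j, \omega_{11}, \dots, \omega_{pk})$ correctly decides whether the induced subgraph of $G$ on the leaves below $j$ admits a proper list-coloring with the specified per-part color-weight vector. I then verify the inductive step for each of the three possible types of node $i$.

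For the base case, $i$ is a leaf corresponding to a single vertex $v_i \in V_h$. The only list-colorings of a one-vertex graph are those assigning $v_i$ some $j \in L(v_i)$; any such choice produces exactly weight $w(v_i)$ on the pair $(h,j)$ and $0$ on all other pairs. The leaf equation is precisely the disjunction over these outcomes, so both directions are immediate. For a union node with children $j$ and $l$, the induced subgraph below $i$ is the disjoint union of the subgraphs below $j$ and $l$ (with no edges across), so a list-coloring of the whole is proper iff its restrictions to the two sides are proper; the weight vectors then simply add componentwise. The union equation enumerates exactly the non-negative decompositions $\omega_{hc} = q_{hc} + (\omega_{hc} - q_{hc})$ for all $(h,c)$, matching the induction hypothesis on both children.

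For a join node, every vertex in the subgraph below $j$ is adjacent to every vertex in the subgraph below $l$. Hence, combining a proper list-coloring of the left side with one of the right side yields a proper list-coloring of the join if and only if no color $c$ is used on both sides simultaneously. This is exactly what the extra constraint $(\sum_{h=1}^{p} q_{hc}) \cdot (\sum_{h=1}^{p} (\omega_{hc}-q_{hc})) = 0$ encodes: for each color $c$, the total weight $\sum_h \omega_{hc}$ must be carried entirely by one of the two subgraphs. Checking that this constraint is simultaneously necessary (from properness across the join) and sufficient (using that list-colorings on the two disjoint sides can be combined independently otherwise) is the one step that deserves care; everything else reduces to bookkeeping on the weight vectors.

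The main obstacle I expect is not any single case in isolation but rather the join node argument, where the additional mutual-exclusion constraint on colors has to be shown to align exactly with the algebraic product condition in $Q_{join}^i$; in particular, one must be careful that allowing $q_{hc}$ to equal $\omega_{hc}$ (or $0$) on the boundary correctly accounts for the situation where color $c$ happens not to be used at all. The final conclusion then follows by applying the bottom-up induction to the root of the cotree.
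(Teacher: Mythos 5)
Your proposal is correct and follows essentially the same route as the paper: a bottom-up verification of the recurrence at each node type of the cotree, with the crux being that the product condition in $Q_{join}^i$ encodes exactly that each color is used on at most one side of a complete union (which, combined with the strict positivity of the vertex weights, is equivalent to properness across the join). The paper's proof is merely a terser version of the same case analysis, so no further comparison is needed.
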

\begin{proof}
In order to show that a vertex list-coloring is proper in the whole graph, it is sufficient to prove that it is a valid list-coloring at each step (i.e., during the computation of the value of $f'(\cdot)$ at each node of the cotree).

Assume $i$ is a join node. Then, the graph induced by the leaves of the subgraph of $T$ rooted at node $i$ is the complete union of two graphs. For each $h$ and $j$, the sum of the weights of the vertices of color $j$ in $V_h$ in these two graphs must be equal to $\omega_{hc}$. However, when taking the complete union of two subgraphs of $G$, vertices not belonging to the same subgraph cannot have the same color (otherwise, the coloring would not be proper).

Assume $i$ is a union node. Then, the graph induced by the leaves of the subgraph of $T$ rooted at node $i$ is the disjoint union of two graphs, and thus, for each $h$ and $j$, the sum of the weights of the vertices of color $j$ in $V_h$ in these two graphs must be equal to $\omega_{hc}$.

Assume $i$ is a leaf node. Then, in any valid list-coloring, the vertex $v_i$ of $G$ associated with $i$ takes only one color, which belongs to $L(v_i)$.
\end{proof}

\noindent The above algorithm runs in $O(n W_{\max}^{pk} (W_{\max}^{pk}+pk))$ time, where $W_{\max}=\max_{h,c} W_{hc}$. Together with Lemma \ref{lem:algo-cographs}, this yields the following result:

\begin{Th}\label{th:algo-cographs}
{\sc WeightedLocallyBoundedListColoring} can be solved:
\begin{itemize}
\item in pseudopolynomial time in cographs when $p$ and $k$ are fixed,
\item in polynomial time in cographs when (i) $p$ and $k$ are fixed, and (ii) all vertex weights are polynomially bounded.
\end{itemize}
\end{Th}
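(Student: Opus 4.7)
The plan is to derive Theorem \ref{th:algo-cographs} as a direct consequence of Lemma \ref{lem:algo-cographs} (which already handles correctness of the dynamic programming table) together with a careful accounting of the running time of the algorithm just described. Correctness is already in hand: the binary cotree $T$ decomposes $G$ into join and union operations on smaller induced subgraphs, and the lemma guarantees that $f'(i, \omega_{11}, \dots, \omega_{pk})$ correctly records whether the subgraph rooted at $i$ admits a list-coloring in which each $V_h$ contains vertices of color $c$ of total weight exactly $\omega_{hc}$. So a feasible coloring of $G$ exists if and only if $f'(r, W_{11}, \dots, W_{pk})$ is \emph{true}, where $r$ is the root of $T$.

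The main step is the running-time analysis. First I would recall that the binary cotree has $O(n)$ nodes and can be computed in linear time. For each node $i$ there are $O(W_{\max}^{pk})$ candidate tuples $(\omega_{11}, \dots, \omega_{pk})$, where $W_{\max} = \max_{h,c} W_{hc} \leq n \cdot \max_{v \in V} w(v)$. At a leaf node, computing one entry takes $O(pk)$ time. At a join or union node, evaluating one entry requires scanning over the set $Q^i_{join}$ or $Q^i_{union}$, each of which has size $O(W_{\max}^{pk})$, so each entry costs $O(W_{\max}^{pk})$ time (checking the extra constraint $(\sum_h q_{hc})(\sum_h (\omega_{hc}-q_{hc}))=0$ at join nodes adds only $O(pk)$ per tuple). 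Multiplying, the total running time is $O(n \cdot W_{\max}^{pk} \cdot (W_{\max}^{pk} + pk))$.

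From this expression both items of the theorem follow immediately. When $p$ and $k$ are fixed, the exponent $pk$ is a constant, so the running time is polynomial in $n$ and $W_{\max}$, which is exactly pseudopolynomial and yields the first item. When, in addition, the vertex weights are polynomially bounded in $n$, we have $W_{\max} \leq n \cdot \max_v w(v) = \mathrm{poly}(n)$, so $W_{\max}^{pk}$ is itself polynomial in $n$, giving the second item. No step presents a genuine obstacle; the only point worth care is handling the join-node constraint (vertices from the two subgraphs being completely joined must not share any color) which is already encoded in the definition of $Q^i_{join}$ and was verified in Lemma \ref{lem:algo-cographs}, so it does not affect the asymptotic complexity.
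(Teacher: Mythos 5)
Your proposal is correct and follows essentially the same route as the paper: correctness is delegated to Lemma \ref{lem:algo-cographs}, and the running time is bounded by $O(n\, W_{\max}^{pk}(W_{\max}^{pk}+pk))$ over the $O(n)$ nodes of the binary cotree, from which both items follow since $pk$ is a constant and $W_{\max}$ is polynomial (respectively pseudopolynomial) in the input. No discrepancies with the paper's argument.
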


Note that this result generalizes the ones in \cite{refBodlaenderJansen95,refGravier02,refJansen}. Again, one can associate a profit function $\pi$ to the vertices of $G$, and modify slightly this dynamic programming algorithm (by replacing ``$\bigvee$'' by ``$\max$'' and ``$\wedge$'' by ``$+$'' in union and join nodes, and by returning $\max_{j \in L(v_i)} \pi(v_i, j)$ in leaf nodes if the above conditions are satisfied, and $- \infty$ otherwise) in order to compute a feasible weighted and locally bounded list-coloring of maximum profit (if any). Note that we return $- \infty$ if $Q_{union}^i=\emptyset$ or $Q_{join}^i=\emptyset$ for some $i$.\\

Theorem \ref{th:algo-cographs} can also be used to show the following proposition:

\begin{Proposition}\label{prop:isolated-vertices-k-fixed}
If $k=O(1)$, {\sc WeightedLocallyBoundedListColoring} can be solved in (pseudo)polynomial time in graphs consisting of isolated vertices.
\end{Proposition}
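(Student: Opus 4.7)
The plan is to exploit the fact that, in a graph consisting of isolated vertices, the coloring constraints from different parts $V_h$ of the partition do not interact. Since there are no edges, the proper coloring requirement is vacuous, and the weight bound $W_{hc}$ only depends on colors assigned to vertices inside $V_h$. Hence the instance decomposes into $p$ independent subproblems, one per part $V_h$: assign to each $v \in V_h$ a color from $L(v) \subseteq \{1, \dots, k\}$ such that the total weight of vertices of color $c$ in $V_h$ equals $W_{hc}$ for every $c$. This decoupling is the key observation, and it is precisely what frees us from the ``$p$ fixed'' hypothesis used in Theorem \ref{th:algo-cographs}.

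For each fixed part $V_h$, I would solve the corresponding subproblem by a standard dynamic program. Order $V_h$ arbitrarily as $v_1, \dots, v_{|V_h|}$, and define $g_h(i, \omega_1, \dots, \omega_k)$ to be \emph{true} if and only if one can assign to each of $v_1, \dots, v_i$ a color from its list so that the total weight of vertices of color $c$ among these is $\omega_c$ for each $c$. The recurrence considers which color $c \in L(v_i)$ is assigned to $v_i$:
\begin{equation*}
g_h(i, \omega_1, \dots, \omega_k) = \bigvee_{c \in L(v_i):\omega_c \geq w(v_i)} g_h\bigl(i-1, \omega_1, \dots, \omega_c - w(v_i), \dots, \omega_k\bigr),
\end{equation*}
with base case $g_h(0, 0, \dots, 0) = $ \emph{true}. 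The subproblem answer is then read off as $g_h(|V_h|, W_{h1}, \dots, W_{hk})$, and the original instance has a solution iff this holds for every $h$.

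Each subproblem has $O(|V_h|\, W_{\max}^k)$ states (with $W_{\max} = \max_{h,c} W_{hc}$) and $O(k)$ work per state, giving $O(|V_h|\, k\, W_{\max}^k)$ per part; summing over $h$ yields a total of $O(n\, k\, W_{\max}^k)$, which is pseudopolynomial when $k = O(1)$, and becomes genuinely polynomial once $W_{\max}$ is polynomially bounded (matching the ``polynomial-time'' assertion of the proposition). There is no real obstacle here: the decomposition argument removes the coupling between parts, and the remaining single-part task is a textbook $k$-dimensional subset-sum-style DP whose correctness follows immediately from the recurrence.
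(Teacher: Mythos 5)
Your proof is correct and follows essentially the same route as the paper: the key step in both is that, with no edges, the parts $V_h$ decouple into $p$ independent single-part instances, which removes any dependence on $p$. The only difference is that the paper then dispatches each single-part instance by invoking Theorem \ref{th:algo-cographs} (a graph of isolated vertices being a cograph), whereas you solve it with an explicit $k$-dimensional subset-sum dynamic program -- the two computations are essentially identical, and your running-time analysis matches the claimed (pseudo)polynomial bound.
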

\begin{proof}
Since there are no edges in this case, any coloring will be proper, and hence any set $V_i$ of the partition can be considered independently from the $p-1$ other $V_j$'s. Hence, solving such an instance of {\sc WeightedLocallyBoundedListColoring} is then equivalent to solving $p$ independent instances where the partition contains only one vertex set. Since $k$ is fixed as well, any such instance can be solved in (pseudo)polynomial time thanks to Theorem \ref{th:algo-cographs}.
\end{proof}

To close this section, we study {\sc WeightedLocallyBoundedListColoring} in particular cographs. We begin by studying complete bipartite graphs $K_{n_1,n_2}$ (which are cographs represented by binary cotrees containing $n_1+n_2-2$ union nodes and only one join node, at the root), where this problem is trivial when $k=2$ (so we shall assume $k \geq 3$). Since the complexity of {\sc WeightedLocallyBoundedListColoring} is the same in graphs consisting of isolated vertices and in stars (which are the graphs $K_{1,n_2}$), Theorems \ref{th:NPC-partition} and \ref{th:NPC-3partition} imply that {\sc WeightedLocallyBoundedListColoring} is \textbf{NP}-complete in complete bipartite graphs if $k$ is not fixed or if the vertex weights are not polynomially bounded, even if any vertex can take any color. Moreover, it was proved in \cite{refJansen} that the list-coloring problem (without constraints on the sizes of the color classes) is strongly \textbf{NP}-complete in complete bipartite graphs when the number $k$ of colors is arbitrary. However, adding isolated vertices cannot yield complete bipartite graphs, so this result does not directly apply to our problem. We now show that a more complex reduction, partly inspired by the one given in \cite{refJansen}, can be obtained for {\sc WeightedLocallyBoundedListColoring} in this case:

\begin{Th}\label{th:NPC-complete-bipartite}
{\sc WeightedLocallyBoundedListColoring} is strongly \textbf{NP}-complete in complete bipartite graphs, even if $p$ and all vertex weights are 1.
\end{Th}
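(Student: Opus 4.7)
The plan is to reduce from {\sc $3-$Partition}, exploiting the key feature of complete bipartite graphs: in any proper coloring of $K_{L,R}$, each color class lies entirely on one side of the bipartition, so the colors used on $L$ and the colors used on $R$ are disjoint.

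Given an instance $(a_1, \dots, a_{3n}, B)$ of {\sc $3-$Partition}, I would construct $K_{L,R}$ with $|L| = nB$ and $|R| = (n-1)nB$, set $p=1$, and assign weight $1$ to every vertex. I partition $L$ into $n$ ``bin blocks'' $L_1, \dots, L_n$ with $|L_j| = B$, and $R$ into $3n$ ``item blocks'' $R_1, \dots, R_{3n}$ with $|R_i| = (n-1)a_i$. I introduce $3n^2$ colors $c_j^i$ (for $j \in \{1, \dots, n\}$ and $i \in \{1, \dots, 3n\}$) and set $W_{1, c_j^i} = a_i$ for every pair. Each vertex of $L_j$ is given the list $\{c_j^i : i \in \{1, \dots, 3n\}\}$, and each vertex of $R_i$ is given the list $\{c_j^i : j \in \{1, \dots, n\}\}$.

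The key observation is that color $c_j^i$ appears only in the lists of vertices of $L_j$ and of $R_i$, so by the feature above it is used either entirely in $L_j$ (exactly $a_i$ times, since $W_{1, c_j^i} = a_i$) or entirely in $R_i$ (exactly $a_i$ times). For the forward direction, starting from a {\sc $3-$Partition} solution $A_1, \dots, A_n$, one colors each $L_j$ by using color $c_j^i$ on $a_i$ vertices for each $i \in A_j$ (which exhausts $L_j$ because $\sum_{i \in A_j} a_i = B$), and colors each $R_i$ by distributing its $(n-1)a_i$ vertices equally, $a_i$ vertices per color, among the colors $c_j^i$ with $j \neq j_i^*$, where $j_i^*$ denotes the bin containing item $i$.

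For the converse, let $x_{j,i} \in \{0, 1\}$ indicate whether $c_j^i$ is used on the left. Counting the $B$ vertices of $L_j$ (which each take a color $c_j^i$ with $x_{j,i} = 1$, contributing $a_i$ usages) yields $\sum_{i : x_{j,i} = 1} a_i = B$, and counting the $(n-1)a_i$ vertices of $R_i$ (which each take a color $c_j^i$ with $x_{j,i} = 0$, contributing $a_i$ usages of that color) yields $\sum_j x_{j,i} = 1$. Hence each item is placed in a unique bin; combined with the constraint $B/4 < a_i < B/2$, each bin then contains exactly three items summing to $B$, yielding a valid {\sc $3-$Partition}. The main obstacle is designing the right side with the correct ``complementary'' capacity so that the only way to satisfy the $W$-bounds is to choose exactly one bin per item; this is what the choice $|R_i| = (n-1)a_i$ together with the uniform bound $W_{1, c_j^i} = a_i$ ensures.
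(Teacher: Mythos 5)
Your reduction is correct, but it takes a genuinely different route from the paper. The paper reduces from {\sc MonotoneOneInThreeSAT}: it places one vertex per clause and one per variable occurrence on the left, one per variable occurrence on the right, uses $2\nu+1$ colors with lists of size at most three, and argues via a case analysis that for each variable all right-side occurrence vertices must take the same one of two colors, which encodes the truth value. You instead reduce from {\sc $3$-Partition}, introducing one color $c_j^i$ per (bin, item) pair whose list-support is confined to the block $L_j$ on the left and the block $R_i$ on the right; the structural fact that every color class of a proper coloring of a complete bipartite graph lies on one side then turns each color into a binary choice variable $x_{j,i}$, and the block cardinalities $\vert L_j\vert = B$ and $\vert R_i\vert = (n-1)a_i$ together with the uniform bound $W_{1,c_j^i}=a_i$ force $\sum_{i:x_{j,i}=1}a_i=B$ and $\sum_j x_{j,i}=1$, which is exactly a $3$-partition (the bound $B/4<a_i<B/2$ giving triples). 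Both arguments are sound and both yield strong \textbf{NP}-completeness with $p=1$ and unit weights (your instance has $n^2B$ vertices and $3n^2$ colors, polynomial since {\sc $3$-Partition} is strongly \textbf{NP}-complete, and the degenerate case $n=1$ can be excluded). Your approach is arguably more transparent, since the ``one side per color'' property is used directly as the selection gadget and the counting argument is mechanical; the paper's approach uses far fewer colors and much shorter lists (size at most three), which is the kind of restriction that matters for strengthened variants, but neither feature is needed for the theorem as stated.
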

\begin{proof}
Given a {\sc MonotoneOneInThreeSAT} instance $I$ with $\mu$ clauses and $\nu$ variables $x_1, \dots, x_{\nu}$, we construct a {\sc WeightedLocallyBoundedListColoring} instance $I'$ in a complete bipartite graph $G$ as follows.

On the left side of $G$, there are $\mu$ vertices $u_1, \dots, u_{\mu}$ associated with the $\mu$ clauses, and, for each $i \in \{1, \dots, \nu\}$, there are $occ(i)$ vertices $v_i^1, \dots, v_i^{occ(i)}$ associated with each variable $x_i$, where $occ(i)$ is the number of occurrences of $x_i$ in the set of clauses. On the right side of $G$, there are $occ(i)$ vertices $w_i^1, \dots, w_i^{occ(i)}$ associated with each variable $x_i$. We set $p=1$, $k=2\nu+1$, $w(v)=1$ for each vertex $v$, and, for each $i \in \{1, \dots, \nu\}$ and each $j \in \{1, \dots, occ(i)\}$, we define $L(v_i^j)=\{i,2\nu+1\}$ and $L(w_i^j)=\{i,\nu+i\}$. Furthermore, for each $h \in \{1, \dots, \mu\}$, we set $L(u_h)=\{\nu+i_1, \nu+i_2, \nu+i_3\}$, assuming that the $h$th clause is $x_{i_1} \vee x_{i_2} \vee x_{i_3}$. Finally, we set $W_{1i}=W_{1(\nu+i)}=occ(i)$ for each $i \in \{1, \dots, \nu\}$, and $W_{1(2\nu+1)}=\mu$. Then, we have the following equivalence between the solutions of $I$ and $I'$: for each $i \in \{1, \dots, \nu\}$, all the $w_i^j$'s take color $i$ if and only if variable $x_i$ takes value \emph{true}. Let us justify such an equivalence.

First assume that we know a solution to $I$. For each $h \in \{1, \dots, \mu\}$, if we denote by $x_{i_1} \vee x_{i_2} \vee x_{i_3}$ the $h$th clause, then vertex $u_h$ takes color $\nu+i_l$, where $i_l \in \{i_1,i_2,i_3\}$ is such that $x_{i_l}$ is the only variable of value \emph{true} in the $h$th clause. Moreover, for each $i \in \{1, \dots, \nu\}$, all the $v_i^j$'s take color $2\nu+1$ if $x_i$ has value \emph{true}, and color $i$ otherwise (meaning that the number of vertices of color $2\nu+1$ is exactly the number of occurrences of variables of value \emph{true}). It can be easily checked that this yields a feasible solution to $I'$, since there are exactly $\mu$ vertices of color $2\nu+1$ (as each of the $\mu$ clauses has exactly one true literal).

Assume now that there is a solution to $I'$. Consider any $i \in \{1, \dots, \nu\}$. If there is some $w_i^j$ that takes color $\nu+i$, then, by construction, no vertex $u_h$ can take color $\nu+i$. Hence, the only vertices that can take color $\nu+i$ are the $w_i^j$'s. So, the only way to have $W_{1(\nu+i)}=occ(i)$ vertices of color $\nu+i$ is that \emph{each} vertex $w_i^j$ takes color $\nu+i$. This implies, in turn, that the only way to have $W_{1i}=occ(i)$ vertices of color $i$ is that \emph{each} vertex $v_i^j$ takes color $i$. Now, if there is some $w_i^j$ that takes color $i$, then, by construction, \emph{each} vertex $v_i^j$ \emph{must} take color $2 \nu + 1$. So, the only way to have $W_{1i}=occ(i)$ vertices of color $i$ is that \emph{each} vertex $w_i^j$ takes color $i$. This implies, in turn, that the only way to have $W_{1(\nu+i)}=occ(i)$ vertices of color $\nu+i$ is that $occ(i)$ vertices $u_h$ take color $\nu+i$. The only such vertices that can take color $\nu+i$ are the $occ(i)$ ones associated with the clauses where $x_i$ appears, so \emph{all} these $u_h$'s must take color $\nu+i$.

In short, for each $i$, either all the $w_i^j$'s take color $i$ (in which case the $occ(i)$ vertices $v_i^j$ take color $2\nu+1$, and the $occ(i)$ vertices $u_h$ corresponding to clauses where $x_i$ appears take color $\nu+i$), or all the $w_i^j$'s take color $\nu+i$ (in which case no vertex $u_h$ takes color $\nu+i$, and the $occ(i)$ vertices $v_i^j$ take color $i$). Hence, whenever a vertex $u_h$ (associated with the clause $x_{i_1} \vee x_{i_2} \vee x_{i_3}$) takes color $\nu+i_l$ for some $i_l \in \{i_1,i_2,i_3\}$, then this means that all the $w_{i_l}^j$'s take color $i_l$, and that all the $w_{i_{l'}}^j$'s take color $\nu+i_{l'}$ for each $i_{l'} \in \{i_1,i_2,i_3\} \setminus \{i_l\}$. In other words, there is exactly one literal of value \emph{true} in each clause of $I$.
\end{proof}

On the positive side, we show that {\sc WeightedLocallyBoundedListColoring} can be solved in (pseudo)polynomial time in complete bipartite graphs if $k$ is fixed but $p$ is not. Note that, in such a graph, all the vertices of a given color belong to the same side of the bipartition of $G$. So, if $k$ is fixed, we can ``guess'' which colors are on each side by enumerating all the possibilities (there are $2^k$ such possibilities). For each configuration (i.e., for each such assignment of colors to both sides) we enumerate, we consider each side of the bipartition independently, and for each one we check whether this configuration is feasible (note that, for each side, we can consider each $V_i$ independently, since they do not interact). We are left with a set of independent instances where $p=1$, $k$ is fixed and all the vertices are isolated vertices, so we can solve each one of them in (pseudo)polynomial time thanks to Proposition \ref{prop:isolated-vertices-k-fixed}. This yields:

\begin{Th}\label{th:complete-bipartite-graphs-k-fixe}
If $k=O(1)$, {\sc WeightedLocallyBoundedListColoring} can be solved in (pseudo)polynomial time in complete bipartite graphs.
\end{Th}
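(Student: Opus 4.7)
The plan is to exploit the following structural property of complete bipartite graphs: in $K_{n_1,n_2}$, every vertex on the left side is adjacent to every vertex on the right side, so any color appearing on both sides would create a monochromatic edge. Hence, in any proper coloring, each color class lies entirely on one side of the bipartition. Since $k=O(1)$, one can afford to guess, among $2^k$ possibilities, which colors are placed on the left side and which on the right; call such a partition $\{1,\dots,k\} = C_L \cup C_R$ a \emph{configuration}, and process each configuration independently.

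For a fixed configuration, I would first filter the lists: replace $L(v)$ by $L(v) \cap C_L$ for every left-vertex $v$, and symmetrically on the right, rejecting the configuration if any resulting list is empty. Then I would split the weight constraints: for every $i \in \{1,\dots,p\}$ and every $j \in C_L$, the bound $W_{ij}$ constrains only the left-vertices of $V_i$, and similarly for $C_R$. As a consistency check, one must verify that $\sum_{j \in C_L} W_{ij}$ equals the total weight of left-vertices in $V_i$ and that $\sum_{j \in C_R} W_{ij}$ equals the total weight of right-vertices in $V_i$, rejecting the configuration otherwise. After these checks pass, the instance decouples into two completely independent subinstances, one per side.

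Each side now carries an edgeless induced subgraph, with at most $k$ (fixed) colors, arbitrary $p$, the same weight function, and the restricted lists. Applying Proposition \ref{prop:isolated-vertices-k-fixed} to each of the two subinstances solves them in (pseudo)polynomial time, and the original instance admits a feasible coloring consistent with the guessed configuration iff both subinstances are feasible. The overall algorithm accepts iff some configuration succeeds, giving a total running time that is $2^k$ times two calls to the algorithm of Proposition \ref{prop:isolated-vertices-k-fixed}, which remains (pseudo)polynomial for constant $k$.

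The main obstacle, if any, is purely notational: one must ensure that the split of each $W_{ij}$ across the two sides is consistent with the left/right weight totals of each $V_i$ before invoking the isolated-vertices algorithm. Beyond this bookkeeping, correctness follows immediately from the fact that color classes in $K_{n_1,n_2}$ cannot straddle the bipartition, so the $2^k$ configurations exhaust all possible supports of the color classes in a proper coloring.
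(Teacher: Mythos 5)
Your proposal is correct and follows essentially the same route as the paper: both observe that in a complete bipartite graph no color class can straddle the bipartition, enumerate the $2^k$ assignments of colors to sides, and reduce each side to an edgeless instance solved via Proposition \ref{prop:isolated-vertices-k-fixed}. Your explicit consistency check on the split of the $W_{ij}$'s is just the bookkeeping the paper subsumes under ``check whether this configuration is feasible.''
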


Now, we turn to the case of complete graphs $G=K_n$ (which are cographs represented by binary cotrees containing $n-1$ join nodes and no union node). Notice that, in a complete graph, all the nodes must take different colors. So, we must have $k \geq n$. If $k > n$, then in any feasible coloring one of the colors will not be used: this is possible only if we have $\sum_{h=1}^p W_{hc}=0$ for this color $c$, and thus we can remove it. Hence, we know that we necessarily have $k=n$, and that each color will appear exactly once. In particular, since the $V_i$'s are disjoint, this implies that, for each color $c$, we have $W_{h_c c} > 0$ for some $h_c$ and $W_{h'c}=0$ for each $h' \neq h_c$. We solve {\sc WeightedLocallyBoundedListColoring} in this case (for arbitrary values of $p$ and for arbitrary vertex weights) by reducing it to a matching instance in a bipartite graph $H$: there is one vertex in $H$ for each vertex $v_i$ in $G=K_n$, and one vertex for each color $c \in \{1, \dots, n\}$. There is an edge between vertex $v_i$ and color $c$ when ($i$) $c \in L(v_i)$, ($ii$) $w(v_i)=W_{h_c c}$ and ($iii$) $v_i \in V_{h_c}$. Due to the rule we used to define the edges of $H$ (by linking vertices of $G$ with ``compatible'' colors only), it is easy to see that we have a solution to the {\sc WeightedLocallyBoundedListColoring} instance if and only if we have a perfect matching in $H$ (a vertex $v_i$ is linked to color $j$ in this matching if and only if it has color $j$ in $G$): therefore, {\sc WeightedLocallyBoundedListColoring} can be solved in polynomial time in this case (using for instance an efficient algorithm computing a maximum flow, since $H$ is bipartite). This yields:

\begin{Th}\label{th:complete-graphs}
{\sc WeightedLocallyBoundedListColoring} can be solved in polynomial time in complete graphs.
\end{Th}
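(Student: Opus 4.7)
The plan is to exploit the fact that a complete graph forces every vertex to receive a distinct color, which collapses the problem to a one-to-one assignment that I can model as a bipartite matching. First I would make the easy structural reductions: since any two vertices of $K_n$ are adjacent, a proper coloring must assign $n$ pairwise distinct colors, so necessarily $k \geq n$; moreover, any color $c$ that ends up unused must satisfy $\sum_{h=1}^{p} W_{hc} = 0$, in which case $c$ contributes nothing and can simply be deleted. After this preprocessing one may assume $k = n$ and that every color is used exactly once. Consequently, for each color $c$ the full weight of its single vertex lies inside exactly one part $V_{h_c}$, so there must exist a unique $h_c \in \{1, \dots, p\}$ with $W_{h_c c} > 0$ and $W_{h'c} = 0$ for all $h' \neq h_c$; any instance failing this preliminary check is trivially infeasible and can be rejected.

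Next I would build an auxiliary bipartite graph $H = (V \cup C, F)$, where $C = \{1, \dots, n\}$ is the set of colors. For each vertex $v_i \in V$ and each color $c \in C$, I insert the edge $v_i c$ into $F$ if and only if all three compatibility conditions hold simultaneously: $c \in L(v_i)$, $v_i \in V_{h_c}$, and $w(v_i) = W_{h_c c}$. The claim to verify is then that the original instance admits a feasible weighted locally bounded list-coloring of $K_n$ if and only if $H$ has a perfect matching; the mapping is the obvious one, namely assigning to $v_i$ the color $c$ matched to it.

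For the equivalence, the forward direction is immediate: a feasible coloring gives a bijection between vertices and colors, and each pair $(v_i, c)$ actually used automatically satisfies the three compatibility conditions, hence forms an edge of $H$, and the resulting edge set is a perfect matching. For the reverse direction, a perfect matching in $H$ assigns each vertex a distinct color (so properness in $K_n$ is automatic), respects each list $L(v_i)$ by construction, and for every color $c$ places its unique vertex inside the correct part $V_{h_c}$ with the correct weight $W_{h_c c}$; together with the preprocessing step ensuring $W_{h'c} = 0$ for $h' \neq h_c$, this exactly recovers all the weight constraints.

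Finally, since $H$ is bipartite with $O(n)$ vertices and $O(n^2)$ edges, a perfect matching can be found in polynomial time by any standard maximum-flow or bipartite matching algorithm, yielding the claimed polynomial running time. I do not expect any real obstacle here; the only subtlety is making sure that the preliminary reductions on $k$ and on the $W_{hc}$'s are correctly carried out before constructing $H$, so that the matching formulation faithfully encodes every constraint of the original instance.
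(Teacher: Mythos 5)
Your proposal is correct and follows essentially the same route as the paper's own proof: the same preprocessing to reduce to $k=n$ with each color used exactly once and a unique part $V_{h_c}$ carrying the weight $W_{h_c c}$, the same bipartite compatibility graph with edges defined by the three conditions $c \in L(v_i)$, $v_i \in V_{h_c}$, and $w(v_i) = W_{h_c c}$, and the same reduction to a perfect matching computed via maximum flow. No gaps.
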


Again, one can associate a profit function $\pi: V \times \{1, \dots, k\} \rightarrow \mathbf Z$ to the vertices of $G$. Then, we can obtain a feasible solution to {\sc WeightedLocallyBoundedListColoring} with maximum profit (if any) by finding a perfect matching of maximum weight in $H$ (or, equivalently, by solving a linear assignment instance optimally), where, for each $i$ and each $j$, the weight of the edge between a vertex $v_i$ and a color $j$ in $H$ is $\pi(v_i,j)$.

\section{Locally bounded list-colorings in split graphs}\label{sect:split-graphs}

In this section, we study the tractability of {\sc WeightedLocallyBoundedListColoring} in split graphs. Recall that a split graph $G$ is a graph whose set of vertices can be partitioned into a vertex set inducing a clique $K$, and a vertex set inducing a stable (or independent) set $S$. For the sake of simplicity, we will denote by $\vert K \vert$ (resp. $\vert S \vert$) the number of vertices of $K$ (resp. of $S$).

From Section \ref{sect:NPC}, {\sc WeightedLocallyBoundedListColoring} remains \textbf{NP}-complete in split graphs: from Theorems \ref{th:NPC-partition} and \ref{th:NPC-3partition}, it is weakly \textbf{NP}-complete when the vertex weights are arbitrary, even if $p=1$, $k=2$, and each vertex can take any color, and strongly \textbf{NP}-complete when $k$ is arbitrary, even if the vertex weights are polynomially bounded, $p=1$, and each vertex can take any color. However, when $k$ is fixed, it can be solved in (pseudo)polynomial time:

\begin{Th}\label{th:split-graphs-P-k-fixed}
{\sc WeightedLocallyBoundedListColoring} can be solved in (pseudo)polynomial time in split graphs when $k=O(1)$.
\end{Th}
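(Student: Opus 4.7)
The plan is to exploit the fact that in any proper coloring using $k$ colors, the clique $K$ must receive $|K|$ pairwise distinct colors, so necessarily $|K| \le k$, and otherwise we immediately answer \emph{false}. Since $k$ is fixed, $K$ has constant size, and there are at most $k!$ proper list-colorings of $K$ compatible with the lists $L(v)$ for $v \in K$. I would enumerate all of them, one by one.

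For each enumerated coloring $\chi_K$ of $K$, I would reduce the remaining instance to one living entirely on $S$, as follows. For every $v \in S$, replace $L(v)$ by $L'(v) = L(v) \setminus \{\chi_K(u) : u \in K \text{ with } uv \in E\}$, and for every $i \in \{1,\dots,p\}$ and every color $c$ set $W'_{ic} = W_{ic} - \sum_{u \in K \cap V_i,\, \chi_K(u)=c} w(u)$; if some $W'_{ic}$ becomes negative, the current $\chi_K$ is infeasible and we skip it.

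The key structural observation is that, once $K$ is colored, the subproblems on the different sets $V_i \cap S$ decouple completely: $S$ carries no edges, and all remaining constraints are indexed by a single $V_i$. So I would solve, for each $i$ separately, the question of whether the vertices of $V_i \cap S$ admit a coloring from their updated lists $L'(\cdot)$ in which color $c$ has total weight exactly $W'_{ic}$. This is a multidimensional subset-sum style problem on an edge-free vertex set, and I would handle it by a straightforward dynamic program: process the vertices of $V_i \cap S$ in arbitrary order, maintaining a table indexed by tuples $(w_1,\dots,w_k)$ with $0 \le w_c \le W'_{ic}$, where each transition assigns a color from $L'(v)$ to the current vertex $v$ and updates the corresponding coordinate by $w(v)$. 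This DP runs in $O(|V_i \cap S| \cdot k \cdot W_{\max}^k)$ time, which sums to $O(n k W_{\max}^k)$ per choice of $\chi_K$, and so to $O(k! \cdot n k W_{\max}^k)$ in total.

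For fixed $k$ this yields pseudopolynomial time in general, and polynomial time when the vertex weights (hence $W_{\max}$) are polynomially bounded. The proof will be essentially routine; the only delicate point to emphasize is the decoupling step, which relies crucially on the combination of $S$ being independent and the weight constraints being local to each $V_i$. The bounded-size clique is what makes the enumeration phase cheap, and without $k$ being fixed the approach would break at both the enumeration step and the DP step (consistent with the hardness results of Section \ref{sect:NPC}).
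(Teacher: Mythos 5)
Your proposal is correct and follows essentially the same route as the paper: bound $\vert K\vert\le k$, enumerate the constantly many proper list-colorings of $K$, update the lists and the bounds $W_{ic}$, and solve the residual edge-free instance on $S$ by decoupling the $V_i$'s and running a pseudopolynomial DP on weight tuples (the paper delegates this last step to Proposition \ref{prop:isolated-vertices-k-fixed}, which is the same decoupling plus the same kind of table). No substantive difference.
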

\begin{proof}
Since we are looking for a proper vertex coloring, any vertex in $K$ cannot have more than $k-1$ neighbors. Hence, $K$ cannot contain more than $k$ vertices. When $k = O(1)$, one can then enumerate in constant time all the possible colorings of $K$ (for each one, we must check that it is proper and that the color of each vertex $v \in K$ is in $L(v)$). Then, for each such coloring, we remove $K$, update the value of each of the $W_{hc}$'s and the list $L(v)$ for each vertex $v \in S$ accordingly. This way, we obtain an instance in a graph consisting of isolated vertices, and where $k = O(1)$: from Proposition \ref{prop:isolated-vertices-k-fixed}, such an instance can be solved in (pseudo)polynomial time, which concludes the proof.
\end{proof}

The list-coloring problem in complete bipartite graphs (and thus in cographs) is strongly \textbf{NP}-complete from \cite{refJansen}. Actually, in the reduction that is used, one side of the bipartite instances can be replaced by a clique: hence, the proof also holds in split graphs (while the one of Theorem \ref{th:NPC-complete-bipartite} does not, despite the fact that the former one partly inspired the latter one). Adding, for each initial vertex, $(k-1)$ new isolated vertices that can take any color, yields (\textbf{NP}-complete) instances of the equitable list-coloring problem. However, a straightforward adaptation of the proof of Theorem \ref{th:NPC-forests} yields a stronger result:

\begin{Th}\label{th:split-graphs-NPC-unit-weights}
{\sc WeightedLocallyBoundedListColoring} is strongly \textbf{NP}-complete in split graphs, even when $p=1$, $w(v)=1$ for each vertex $v$, and the degree of each vertex in the part $S$ inducing an independent set is one.
\end{Th}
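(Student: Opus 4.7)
The plan is to reuse the star-forest reduction from \textsc{$3$-Partition} constructed in the proof of Theorem~\ref{th:NPC-forests}, and turn it into a split graph instance by completing the set $K := \{v_i^j : 1 \le i \le 3n,\, 1 \le j \le n\} \cup \{u_1, \dots, u_{3n}\}$ into a clique, while leaving the leaves of the stars untouched. The set $S$ of all leaves then remains independent, and each leaf keeps only its unique original edge to its center $v_i^j$, so every vertex of $S$ has degree exactly $1$. The lists $L(\cdot)$, the weights (all equal to $1$), the partition ($p=1$) and the bounds $W_{1c}$ stay exactly as in Theorem~\ref{th:NPC-forests}.

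The central observation is that the extra clique edges rule out no coloring that was feasible before. Indeed, the analysis carried out in the proof of Theorem~\ref{th:NPC-forests} already establishes that any feasible coloring of the original star-forest instance assigns pairwise distinct colors to the vertices of $K$: for each $i$, the bounds $W_{1(ni+3n+h)}=1$ together with the lists of the $v_i^j$'s and $u_i$ force each of the colors $ni+3n+1,\dots,ni+4n$ to appear on exactly one vertex among $\{v_i^1,\dots,v_i^n,u_i\}$, the bound $W_{1(n+i)}=3na_i(n-1)+1$ forces exactly one $v_i^j$ to take color $n+i$, and the color families associated with distinct indices $i$ are pairwise disjoint. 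Hence, in any feasible coloring, the vertices of $K$ already receive pairwise distinct colors, so the proper-coloring constraint imposed by the new clique edges is automatically satisfied. Conversely, the explicit coloring built from a \textsc{$3$-Partition} solution in the proof of Theorem~\ref{th:NPC-forests} also assigns pairwise distinct colors to the vertices of $K$, so the forward direction of the equivalence carries over verbatim.

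Combining these two points, the equivalence between \textsc{$3$-Partition} and the modified {\sc WeightedLocallyBoundedListColoring} instance is preserved. The construction remains polynomial in $n$, and by design produces a split graph (with clique $K$ and independent set $S$) in which $p=1$, $w(v)=1$ for every vertex $v$, and every vertex of $S$ has degree $1$; strong \textbf{NP}-completeness then follows from the strong \textbf{NP}-completeness of \textsc{$3$-Partition}. I do not anticipate a real obstacle: the only step requiring care is checking that every forcing argument used in the proof of Theorem~\ref{th:NPC-forests} still goes through, but none of them relied on $K$ being non-complete, and the extra edges merely reinforce constraints that the lists and bounds already enforce.
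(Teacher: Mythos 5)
Your proposal is correct and matches the paper's own proof essentially verbatim: the paper likewise takes the Theorem~\ref{th:NPC-forests} instance, turns the $3n(n+1)$ vertices $\{u_i\}\cup\{v_i^j\}$ into a clique, and relies on the already-established fact that these vertices receive pairwise distinct colors in any feasible coloring, so the added edges change nothing and every leaf in $S$ has degree one. Your spelled-out justification of the distinct-colors claim is a faithful (and slightly more detailed) version of the paper's one-line appeal to that property.
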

\begin{proof}
Start from the reduction described in the proof of Theorem \ref{th:NPC-forests}, and add a clique $K$ on $3n(n+1)$ vertices, namely all the $u_i$'s and all the $v_i^j$'s. As any two of these vertices do receive different colors in the original reduction, we obtain an equivalent instance. Then, the independent set $S$ contains all the leaves of the $S_i^j$'s, and hence any vertex of $S$ has degree one, which concludes the proof.
\end{proof}

Note, however, that the list-coloring problem is easy in split graphs if every vertex in $S$ has degree at most 1: remove any $v \in S$ with only one possible color, update the possible colors for each $u \in K$ accordingly, use an optimal flow to color the vertices of $K$ appropriately (as in Theorem \ref{th:complete-graphs}), and then give to any other vertex a color not used by its unique neighbor (if such a neighbor exists).

Moreover, the lists of possible colors play a major role in the above reduction, so one may wonder what happens when each vertex can take any color. When $p$ and $k$ are arbitrary, and $w(v)=1$ for each vertex $v$, the following result holds:

\begin{Th}\label{th:split-graphs-NPC-unit-weights-no-list-coloring}
{\sc WeightedLocallyBoundedListColoring} is strongly \textbf{NP}-complete in split graphs, even when $W_{hc} \in \{0,1\}$ for each $h$ and $c$, and each vertex has weight 1 and can take any color.
\end{Th}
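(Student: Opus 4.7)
The plan is to reduce from the list-coloring problem in split graphs, which the paper has already observed (via the adaptation of Jansen's bipartite reduction) to be strongly \textbf{NP}-complete. The central trick is to simulate each vertex's list $L(v)$ by a part containing $v$ together with $|L(v)|-1$ freshly-added isolated dummies in the stable set, with the vector $(W_{h,c})_c$ being exactly the indicator vector of $L(v)$; since $W_{hc}\in\{0,1\}$ and every vertex has weight one, such a part is forced to use exactly the colors of $L(v)$ and nothing else, thereby encoding the list constraint of $v$ without ever writing down a non-trivial list.

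In detail, starting from a list-coloring instance on a split graph $G=(K\cup S,E)$ with $k$ colors and lists $L$, I would build $G'$ by leaving $G$ untouched and, for each $v\in V(G)$, adding $|L(v)|-1$ new isolated vertices $d_v^1,\dots,d_v^{|L(v)|-1}$ placed in the stable set. The graph $G'$ is still a split graph, with clique $K$ and stable set $S\cup D$. I take one part per original vertex, $V_{h(v)}=\{v,d_v^1,\dots,d_v^{|L(v)|-1}\}$, and set $W_{h(v),c}=1$ iff $c\in L(v)$, with all weights equal to $1$ and each vertex having the trivial list $\{1,\dots,k\}$. By construction $W_{hc}\in\{0,1\}$, the parts cover $V(G')$, and $\sum_c W_{h(v),c}=|L(v)|=|V_{h(v)}|$, so this is a valid WLBLC instance; the size is polynomial since $k$ is polynomially bounded in Jansen's starting instance.

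For the equivalence, inside $V_{h(v)}$ the partition constraint forces the $|L(v)|$ vertices to use the colors of $L(v)$ bijectively, so $v$ must receive some color in $L(v)$; conversely, any proper list-coloring of $G$ extends to $G'$ by assigning the dummies of each part bijectively to $L(v)\setminus\{c(v)\}$. Properness is trivially preserved for the dummies, as they have no neighbors. The only delicate point — and the whole reason the reduction works — is to observe that the dummies are simultaneously free of list constraints (by the hypothesis that every vertex can take any color) and of properness constraints (no edges), so the partition alone determines them, which is exactly what allows the single-vertex-per-part structure to encode arbitrary lists without ever imposing one explicitly. Apart from that, all verifications are routine.
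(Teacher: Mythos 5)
Your proof is correct, but it takes a genuinely different route from the paper's. The paper proves this theorem by a direct, self-contained reduction from 3SAT: two clique vertices $y_i,z_i$ per variable, with the part $\{y_i,z_i\}$ and its $0/1$ bounds forcing these two vertices to use exactly the colors $i$ and $\nu+i$ (thus encoding the truth value of $x_i$ by which of the two colors $y_i$ receives), plus three stable-set vertices per clause whose part has unit bounds on the three ``complementary'' colors of its literals, the edges from $w_j$ to the relevant $y_{i_l}$'s ensuring that $w_j$ can only take a complementary color whose literal is true. You instead reduce from the list-coloring problem in split graphs --- which the paper does assert to be strongly \textbf{NP}-complete, via the adaptation of Jansen's reduction mentioned just before Theorem \ref{th:split-graphs-NPC-unit-weights}, so your starting point is legitimate --- and your key gadget (one part per original vertex $v$, consisting of $v$ together with $\vert L(v)\vert-1$ isolated dummies in the stable set, with $(W_{h(v),c})_c$ the indicator vector of $L(v)$) is a clean, generic demonstration that $0/1$ local bounds can fully simulate list constraints in split graphs. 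Both directions of your equivalence check out: the bounds force each color of $L(v)$ to appear exactly once in the part and forbid all other colors, so $v$ gets a color of $L(v)$; conversely the isolated dummies absorb $L(v)\setminus\{c(v)\}$ bijectively with no properness or list obstruction. What your approach buys is modularity and a reusable list-simulation trick; what the paper's buys is self-containedness (it does not lean on the only-sketched adaptation of Jansen's reduction to split graphs) and an explicit instance structure that the paper exploits afterwards, when it remarks that adapting this same proof yields hardness with all $W_{hc}=1$ after padding with isolated vertices.
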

\begin{proof}
Take an instance of 3SAT \cite{refGarey}, that consists of $\mu$ clauses of size 3 defined on $\nu$ boolean variables $x_1, \dots, x_{\nu}$. To each variable $x_i$, we associate two vertices $y_i$ and $z_i$. To the $j$th clause, we associate three vertices $u_j,v_j,w_j$, for each $j$. Then, we add edges in the following way, in order to obtain a split graph: the vertices $y_i$ and $z_i$ over all $i$ induce a clique on $2 \nu$ vertices, and the vertices $u_j,v_j,w_j$ over all $j$ induce an independent set on $3 \mu$ vertices. For each $j \in \{1, \dots, \mu\}$, if we let the $j$th clause be $x'_{i_1} \vee x'_{i_2} \vee x'_{i_3}$ (where $x'_{i_l} \in \{x_{i_l}, \bar{x}_{i_l}\}$ for each $l \in \{1,2,3\}$), we also add an edge between $w_j$ and $y_{i_l}$, for each $l \in \{1,2,3\}$. Finally, we let $k = 2 \nu$ and $p=\nu+\mu$, and then define $V_i = \{y_i,z_i\}$ and $W_{ii}=W_{i(\nu+i)}=1$ for each $i \in \{1, \dots, \nu\}$, as well as $V_{\nu+j} = \{u_j,v_j,w_j\}$ and $W_{(\nu+j)\bar{c}_{i_1}}=W_{(\nu+j)\bar{c}_{i_2}}=W_{(\nu+j)\bar{c}_{i_3}}=1$ for each $j \in \{1, \dots, \mu\}$, where, for each $l \in \{1,2,3\}$, $\bar{c}_{i_l}=\nu+i_l$ if $x'_{i_l}=x_{i_l}$, and $\bar{c}_{i_l}=i_l$ otherwise. Note that any other $W_{hc}$ is equal to 0, and that, for each vertex $v$, $w(v)=1$ and $L(v) \in \{1, \dots, 2 \nu\}$.

We can define the following equivalence between the solutions of these two instances: for each $i$, $x_i = true$ if and only if the color of $y_i$ is $i$. Note that, for each $j \in \{1, \dots, \mu\}$, the vertex $w_j$ must take a color $\bar{c}_{i_l}$ for some $l \in \{1,2,3\}$ such that $y_{i_l}$ takes color $c_{i_l}$ (and hence the literal $x'_{i_l}$ is \emph{true}), where $c_{i_l} = i_l$ if $\bar{c}_{i_l}=\nu+i_l$, and $c_{i_l}=\nu+i_l$ otherwise. This concludes the proof.
\end{proof}

Actually, we can even prove a stronger result, and get rid of the assumption that $p$ must be arbitrary, while allowing each vertex to take any color.

\begin{Th}\label{th:split-graphs-NPC-unit-weights-no-list-coloring-p-fixed}
{\sc WeightedLocallyBoundedListColoring} is strongly \textbf{NP}-complete in split graphs, even when $p=1$, $W_{1c} \in \{1,4\}$ for each color $c$, and each vertex has weight 1 and can take any color.
\end{Th}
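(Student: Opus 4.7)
The plan is to reduce from 3SAT, adapting the reduction used in the proof of Theorem~\ref{th:split-graphs-NPC-unit-weights-no-list-coloring} so as to do without the non-trivial partition $V_1,\dots,V_p$ while keeping $W_{1c}\in\{1,4\}$. Given an instance with $\nu$ variables $x_1,\dots,x_\nu$ and $\mu$ clauses $C_1,\dots,C_\mu$, I would introduce, for each variable $x_i$, two \emph{truth colors} $T_i$ and $F_i$ with $W_{1T_i}=W_{1F_i}=4$, and, for each clause $C_j$, one \emph{clause color} $\gamma_j$ with $W_{1\gamma_j}=1$. In the clique $K$ of the split graph I place two vertices $y_i,z_i$ per variable together with as many auxiliary clique vertices as are needed to make $|K|$ equal to the total number of colors; in the independent set $S$, I place six padding vertices $t_i^1,t_i^2,t_i^3,f_i^1,f_i^2,f_i^3$ per variable and one clause witness $w_j$ per clause. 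The edges are designed so that each $t_i^r$ (resp.\ $f_i^r$) is adjacent to every clique vertex except $y_i$ (resp.\ $z_i$), while each $w_j$ is adjacent to every clique vertex except the three $y_{i_l}$ or $z_{i_l}$ ($l=1,2,3$) corresponding to the ``true side'' of its literals (positive literals use $y$, negative literals use $z$).

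The first key step is a counting argument forcing $\{y_i,z_i\}=\{T_i,F_i\}$ in every valid coloring. Because the clique has exactly as many vertices as there are colors, every color is used once in $K$, so each padding $t_i^r$ is forced to take $y_i$'s color. Consequently $y_i$ together with the three $t_i^r$'s provides four vertices of $y_i$'s color, which must then be a color with $W=4$; the case where this is $T_j$ or $F_j$ for $j\neq i$ is ruled out by applying the same argument to variable $j$ (which would overflow $W_{1T_j}$ or $W_{1F_j}$). The symmetric argument with the $f_i^r$'s and $z_i$, together with $y_i\neq z_i$ (they are adjacent in $K$), yields $\{y_i,z_i\}=\{T_i,F_i\}$, and I read the truth assignment as $x_i=\mathit{true}\Leftrightarrow y_i=T_i$.

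The second key step analyses the clause gadget. Since the variable gadgets already saturate every $W=4$ color, the clause witness $w_j$ cannot take a variable color without simultaneously forcing some padding of the corresponding variable to release its slot and take the $W=1$ color $\gamma_j$ instead. By the choice of non-adjacencies, the only variable color reachable at $w_j$ is the one held by a $y_{i_l}$ or $z_{i_l}$ whose current color makes the corresponding literal of $C_j$ true; hence the coloring extends to $w_j$ if and only if $C_j$ is satisfied. Conversely, given a satisfying assignment, one exhibits the coloring explicitly by choosing, for each clause, a single true literal and performing the padding-to-witness swap just described.

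The hard part will be calibrating the adjacencies so that the swap is possible exactly when the chosen literal is true, and not merely when the corresponding variable color is ``visible'': in particular, $w_j$ must be prevented from stealing the color $F_{i_l}$ when $x_{i_l}$ is false, which typically requires adjoining $w_j$ to the clique vertex currently carrying that color. Since the identity of that vertex depends on the truth assignment, additional auxiliary clique vertices (carrying further $W=1$ or $W=4$ colors) may need to be introduced so that the total vertex count matches $\sum_c W_{1c}$ exactly. Once the adjacencies are fixed, the verification that the resulting graph is split, that $p=1$, that every vertex has weight $1$ and full list $L(v)=\{1,\dots,k\}$, and that $W_{1c}\in\{1,4\}$ is routine.
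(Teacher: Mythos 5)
There is a genuine gap, and it sits at the heart of your reduction: the clause gadget cannot be colored at all. Since $|K|$ equals the number of colors, every color appears exactly once in $K$, so a vertex of the independent set can only take the color of a clique vertex it is \emph{not} adjacent to. With the adjacencies you specify, each padding vertex $t_i^r$ has exactly one admissible color, namely the color of $y_i$; it therefore cannot ``release its slot and take the $W=1$ color $\gamma_j$ instead'', because the clique vertex carrying $\gamma_j$ is adjacent to $t_i^r$. Consequently the three non-neighbours of a clause witness $w_j$ all carry $W_{1c}=4$ colors that are already saturated by the variable gadgets ($y_{i_l}$ or $z_{i_l}$ together with its three paddings), and $w_j$ has no feasible color whatever the truth assignment: the instance you build is a no-instance as soon as $\mu\geq 1$. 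You acknowledge this yourself when you defer the ``calibration of the adjacencies'' and the introduction of unspecified auxiliary clique vertices, but that calibration \emph{is} the reduction, and it is unclear it can be done while keeping $L(v)=\{1,\dots,k\}$ for every vertex. A secondary weakness is your first key step: it only shows that each of $y_i,z_i$ takes \emph{some} $W=4$ color, not $T_i$ or $F_i$ specifically; since $p=1$, all weights are $1$ and all lists are full, any permutation of the $W=4$ colors maps solutions to solutions, so nothing can ``overflow'' and nothing ties $y_i$ to its own truth colors. (This would be harmless if all gadgets were defined by non-adjacency to vertices rather than by colors, but it shows the truth-color bookkeeping buys you nothing.)

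The paper sidesteps all of this by reducing from \textsc{3-DimensionalMatching} rather than 3SAT: the clique consists of one vertex $t_h$ per triple, the independent set of one vertex per element of $X\cup Y\cup Z$, and $t_h$ is non-adjacent exactly to the three elements of its triple. With $k=|\mathcal{T}|$, $W_{1c}=4$ for $|X|$ colors and $W_{1c}=1$ for the remaining ones, a triple is selected if and only if $t_h$ takes a $W=4$ color, which then forces its three non-neighbours to share that color; the $|X|$ selected triples must cover $S$ exactly, and every other clique vertex absorbs a $W=1$ color. No swapping, padding, or auxiliary vertices are needed. I would suggest either adopting that reduction or, if you want to keep 3SAT as the source, completely reworking the clause gadget so that the witness has a genuinely available color exactly when the clause is satisfied.
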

\begin{proof}
We reduce from the \textbf{NP}-complete problem \textsc{3-DimensionalMatching}, in which we are given three sets $X,Y,Z$ of elements such that $\vert X \vert = \vert Y \vert = \vert Z \vert$, and a list of triples $\mathcal{T} \subseteq X \times Y \times Z$. One wants to decide whether there exists $\mathcal{T'} \subseteq \mathcal{T}$ covering all the elements of $X$, $Y$, and $Z$, and such that $\vert \mathcal{T'} \vert = \vert X \vert$.

Given an instance $I$ of \textsc{3-DimensionalMatching}, we define the following instance $I'$ of {\sc WeightedLocallyBoundedListColoring}. There are $3 \vert X \vert + \vert \mathcal{T} \vert$ vertices, that can take any color, and have weight 1: one vertex $t_h$ for each triple in $\mathcal{T}$, one vertex $x_i$ for each element in $X$, one vertex $y_j$ for each element in $Y$, and one vertex $z_l$ for each element in $Z$. Moreover, there is an edge between each $t_h$ and all the other vertices, \emph{except} $x_{i_h}$, $y_{j_h}$ and $z_{l_h}$, provided that the $h$th triple of $\mathcal{T}$ is composed of the $i_h$th element of $X$, of the $j_h$th element of $Y$, and of the $l_h$th element of $Z$. In particular, we have $\vert K \vert = \vert \mathcal{T} \vert$ and $\vert S \vert = 3 \vert X \vert$, and we define $k = \vert \mathcal{T} \vert$ and $p=1$. Finally, we define the color bounds as follows: $W_{1c} = 4$ for each $c \in \{1, \dots, \vert X \vert\}$, and $W_{1c} = 1$ for each $c \in \{\vert X \vert+1, \dots, k\}$.

Then, it is easy to see that the following equivalence holds: for each $h$, the $h$th triple in $I$ is in $\mathcal{T'}$ if and only if $t_h$ takes a color $c$ with $W_{1c} = 4$ in $I'$. Indeed, in $I'$, each color \emph{must} appear exactly once in $K$, as $\vert K \vert=k$. Moreover, each vertex $t_h \in K$ is non adjacent to exactly three vertices: hence, if $t_h$ takes some color $c$ with $W_{1c} = 4$, then the three vertices non adjacent to it \emph{must} take color $c$. The $\vert X \vert$ such colors must cover all the vertices of $S$, and $\vert X \vert$ vertices of $K$. Each of the $k-\vert X \vert$ other vertices of $K$ takes a color $c'$ with $W_{1c'} = 1$.
\end{proof}

\begin{Cor}\label{cor:split-graphs-NPC-capacitated-coloring}
The capacitated coloring problem is \textbf{NP}-complete in split graphs.
\end{Cor}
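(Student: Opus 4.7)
The plan is to derive the corollary as a direct consequence of Theorem \ref{th:split-graphs-NPC-unit-weights-no-list-coloring-p-fixed}. Since the capacitated coloring problem (as recalled in Section \ref{sec:Intro}) asks for a proper $k$-coloring in which each color class has size \emph{at most} a prescribed bound $n_c$, with no list-coloring constraint, I would start by checking that the instance $I'$ produced in that theorem already has all the structural features required: it is a split graph, $p=1$, every vertex has unit weight, and every vertex can take any color, so nothing about list-coloring has to be stripped away. Membership in \textbf{NP} being obvious, the only issue is the discrepancy between our equality-type bounds $W_{1c}$ and the upper-bound nature of capacities in capacitated coloring.

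The key observation I would then highlight is that in $I'$ the color bounds sum to $4\vert X \vert + (\vert \mathcal{T} \vert - \vert X \vert) = 3\vert X \vert + \vert \mathcal{T} \vert$, which is exactly the total number of vertices of $I'$. Hence, if one defines the capacitated coloring instance by taking the same graph, $k = \vert \mathcal{T} \vert$ colors, and capacities $n_c := W_{1c}$, then any proper $k$-coloring satisfying the upper-bound capacities $n_c$ must automatically saturate each of them. In particular such a coloring exists if and only if it exists with equalities, that is, if and only if $I'$ admits a feasible locally bounded list-coloring, which by Theorem \ref{th:split-graphs-NPC-unit-weights-no-list-coloring-p-fixed} is equivalent to the original \textsc{3-DimensionalMatching} instance having a solution.

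I do not foresee any real obstacle here, since Theorem \ref{th:split-graphs-NPC-unit-weights-no-list-coloring-p-fixed} was explicitly engineered to yield this corollary: the careful point in that theorem was precisely to ensure $p=1$, no list constraints, and tight color bounds so that equalities and upper bounds coincide. The only minor care to take when writing the proof is to state the ``tightness of the capacities forces equality'' argument cleanly, so that the reduction transfers verbatim from our framework to the upper-bound formulation used in \cite{refBonomo11}.
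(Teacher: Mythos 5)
Your proposal is correct and is exactly the derivation the paper intends (the corollary is left as an immediate consequence of Theorem \ref{th:split-graphs-NPC-unit-weights-no-list-coloring-p-fixed}): the instance there has $p=1$, unit weights, no list constraints, and color bounds summing to $4\vert X\vert+(\vert\mathcal{T}\vert-\vert X\vert)=3\vert X\vert+\vert\mathcal{T}\vert$, the total number of vertices, so the ``at most'' capacities of the capacitated coloring problem are necessarily saturated and coincide with the equality constraints of {\sc WeightedLocallyBoundedListColoring}. Your explicit statement of this saturation argument is the only non-trivial point, and it is handled correctly.
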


Note that, on the contrary, the bounded coloring problem is easy in split graphs \cite{refBodlaenderJansen95}. We end this section by providing three last tractable special cases.

The first one is obtained by assuming that $\vert S \vert = O(1)$. In this case, we enumerate the $O(k^{\vert S \vert})$ potential colorings of $S$ (for each one, we must check that it is proper, and that the color of each vertex $v \in S$ is in $L(v)$). Then, for each such coloring, we remove $S$, update the values $W_{hc}$ and the lists $L(v)$ for all the vertices $v \in K$ accordingly, and we are left with an instance in a complete graph, that we can solve in polynomial time thanks to Theorem \ref{th:complete-graphs}.

The second one is obtained by assuming that the tree-width is $O(1)$ (i.e., that $\vert K \vert = O(1)$) and $w(v)=1$ for each vertex $v$ (without the latter assumption, the problem is \textbf{NP}-complete from Theorems \ref{th:NPC-partition} and \ref{th:NPC-3partition}). In this case, we enumerate the $O(k^{\vert K \vert})$ potential colorings of $K$ (for each one, we must check that it is proper, and that the color of each $u \in K$ is in $L(u)$). Then, for each such coloring, we remove $K$, update the values $W_{hc}$ and the list $L(v)$ of each $v \in S$ accordingly, and we are left with an instance in a set of isolated vertices and where each vertex has weight 1, that we can solve efficiently thanks to Proposition \ref{prop:isolated-vertices-poly}.

The third one complements Corollary \ref{cor:split-graphs-NPC-capacitated-coloring}, and is obtained by assuming that each vertex has weight 1 and can take any color, and that, for all colors $c$ except a constant number $k'$ of them, $W_{hc} = B$ for each $h$, for some common constant $B$ (the assumption $k'=O(1)$ being less restrictive than $k=O(1)$).

Recall that it was proved in \cite{refBodlaenderJansen95} that the bounded coloring problem can be solved in polynomial time in split graphs, by means of a reduction to an intermediate problem, which is then solved using a maximum flow algorithm. The construction we will use to solve this third special case (in which the number $k$ of colors is arbitrary) also solves the bounded coloring problem, with a more direct reduction to a maximum flow problem. Namely, we prove:

\begin{Th}\label{th:split-graphs-poly-max-flow}
{\sc WeightedLocallyBoundedListColoring} can be solved in polynomial time in split graphs when (i) each vertex has weight 1 and can take any color, and (ii) for all colors $c$ except a constant number $k'$ of them (called the \emph{singular} colors), $W_{hc} = B$ for each $h$, for some constant $B$.
\end{Th}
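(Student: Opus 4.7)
The plan is to enumerate, up to symmetry, the proper colorings of the clique $K$, and for each one reduce the remaining problem on $S$ to a maximum-flow computation. Since $K$ is a clique, its vertices receive pairwise distinct colors, so we may assume $|K|\leq k$. Moreover, since each common color $c$ satisfies $W_{hc}=B$ for every part $h$ and no vertex has list restrictions, two colorings of $K$ that differ only by a permutation of the common colors are indistinguishable for our purposes. Hence an ``essential'' coloring of $K$ is fully specified by a subset $R\subseteq K$ with $|R|\leq k'$, an injection $\sigma$ from $R$ into the set of singular colors, and the tacit (unlabeled) assignment of $|K|-|R|$ distinct common colors to $K\setminus R$. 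There are only $O(|K|^{k'})=O(n^{k'})$ such configurations, which is polynomial since $k'=O(1)$.

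Fix such a configuration, and assume $F:=k-k'-|K\setminus R|\geq 0$ (the number of unused, or ``free'', common colors; otherwise discard it). For each singular color $c$ and each part $h$, set $k_{hc}=|\sigma^{-1}(c)\cap V_h|\in\{0,1\}$. Build a flow network with source $s$, sink $t$, and three families of internal nodes, each with an arc into $t$: (i) a node $(h,c)_{\mathrm{s}}$ for each part $h$ and singular color $c$, of capacity $W_{hc}-k_{hc}$; (ii) a node $(h,v_i)_{\mathrm{u}}$ for each part $h$ and each $v_i\in K\setminus R$, of capacity $B-[v_i\in V_h]$; (iii) a node $(h,\star)_{\mathrm{f}}$ for each part $h$, of capacity $FB$. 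Each vertex $u\in S$ has a unit arc from $s$; if $u\in V_h$, we add a unit arc from $u$ to $(h,c)_{\mathrm{s}}$ whenever no neighbor of $u$ in $R$ is colored $c$, a unit arc to $(h,v_i)_{\mathrm{u}}$ whenever $uv_i\notin E$, and always a unit arc to $(h,\star)_{\mathrm{f}}$ (since no vertex of $K$ carries a free common color).

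The core claim is that the configuration extends to a valid list-coloring of $G$ if and only if this network admits an integer max-flow of value $|S|$. A valid coloring clearly induces such a flow by routing each $u\in S$ to the node matching its color. Conversely, given a saturating flow, singular and used-common arcs directly prescribe a color for $u$; inside each part $h$, the $FB$ vertices of $V_h\cap S$ routed to $(h,\star)_{\mathrm{f}}$ can be partitioned arbitrarily into $F$ groups of $B$ and labeled with the $F$ free common colors (properness is automatic, since no $K$-vertex carries any of these colors), yielding exactly $W_{hc}=B$ per part for each free common color. A short calculation using $\sum_{h,c}W_{hc}=n$ shows that the total capacity into $t$ equals $n-|K|=|S|$, so feasibility is indeed equivalent to a max-flow of value $|S|$ for some configuration.

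I expect the main obstacle to be handling the potentially large number of common colors: enumerating colorings of $K$ naively, or creating one flow node per (common color, part) pair, is too expensive when $k$ is unbounded. The key insight that unlocks the reduction is that the free common colors are fully interchangeable and are never forbidden for any vertex of $S$, which justifies collapsing them into the single per-part bucket $(h,\star)_{\mathrm{f}}$ of capacity $FB$. Since there are $O(n^{k'})$ configurations and each one requires a single polynomial max-flow computation, the total running time is polynomial.
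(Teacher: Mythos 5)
Your proof is correct and follows essentially the same strategy as the paper's: enumerate the $O(n^{k'})$ assignments of singular colors to clique vertices, use the interchangeability of the common colors to fix the coloring of the rest of $K$ up to symmetry, and solve the residual problem on the independent set $S$ by a single polynomial max-flow computation per configuration. The only cosmetic difference is that the paper simply invokes Proposition~\ref{prop:isolated-vertices-poly} (a bipartite perfect-matching formulation with $W_{hc}$ copies of each color-part node, already of polynomial size since $\sum_{h,c} W_{hc} = n$), so your extra step of collapsing the unused common colors into a single bucket per part, while valid, is not actually needed.
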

\begin{proof}
Let us consider such an instance in a split graph $G$, and let $u_1,\dots,u_{\vert K \vert}$ be the vertices of $K$, and $v_1,\dots,v_{\vert S \vert}$ the vertices of $S$. We start by ``guessing'' the singular colors that will appear in $K$, and for each such color which (unique) vertex of $K$ uses it. This can be done in $O\left((\vert K \vert +1)^{k'}\right)$ time by a brute-force enumeration. Any other vertex of $K$ will take an arbitrary (but different for each of them) color $c$ such that $W_{hc} = B$ for each $h$, and some singular colors may appear only in $S$. Also, for each $h$ and each $c$, we set $W'_{hc}=W_{hc}-w(u_i)=W_{hc}-1$ if some vertex $u_i \in K \cap V_h$ takes color $c$, and $W'_{hc}=W_{hc}$ otherwise.

Since $k \geq \vert K \vert$ (otherwise, there is no solution), we always end up with a proper coloring of $K$. Then, we define for each $v_j \in S$ a list of possible colors $L(v_j)=\{1,\dots,k\} \setminus \{c: \text{ there exists $u_i \in K$ adjacent to $v_j$ that takes color $c$}\}$. Now, we can remove the vertices of $K$, and we are left with an instance of {\sc WeightedLocallyBoundedListColoring} in a graph consisting of isolated vertices of weight 1 (i.e., in the subgraph of $G$ induced by the vertices of $S$). The color bounds are the $W'_{hc}$'s, and we can solve this instance in polynomial time (as a maximum matching problem in a bipartite graph, and hence as a maximum flow problem) thanks to Proposition \ref{prop:isolated-vertices-poly}, which concludes the proof.
\end{proof}

Again, note that this generalizes the polynomial-time solvability of the case where $p=1$ and $k'=0$, i.e., of the bounded coloring problem in split graphs.

Moreover, without the assumption that each vertex can take any color, the problem becomes hard, even when $p=1$. Indeed, take the reduction from Theorem \ref{th:split-graphs-NPC-unit-weights-no-list-coloring-p-fixed}, and, for each color $c$ such that $W_{1c}=1$, add three isolated vertices that can only take color $c$. This way, we obtain (\textbf{NP}-complete) instances where $W_{1c}=4$ for each color $c$, and hence where $k'=0$. Similarly, adapting the proof of Theorem \ref{th:split-graphs-NPC-unit-weights-no-list-coloring} shows that, without the above assumption, the problem is \textbf{NP}-complete when $p$ is arbitrary and $W_{hc}=1$ for each $h$ and each $c$.

However, if each vertex in the clique part $K$ of $G$ has both a list of possible colors and an arbitrary weight, then the proof of Theorem \ref{th:split-graphs-poly-max-flow} can easily be adapted to show that the problem remains tractable, provided that each vertex of $S$ has weight 1 and is allowed to take any color. Indeed, after the ``guess'' part, we can give appropriate non singular colors to the remaining uncolored vertices of $K$ by solving another matching instance (and not in a greedy way anymore), in a way similar to the one used in the proof of Theorem \ref{th:complete-graphs}.

\section{Extension to edge colorings} \label{sect:edge-colorings}

In this last section, we extend to edge colorings all our previous results concerning (vertex) $k$-colorings in cographs, split graphs, and graphs of bounded tree-width. An edge coloring of a graph is an assignment of colors (integers) to its edges, in such a way that any two edges sharing a vertex take different colors. The problem we consider in this section is thus the following:\\

\noindent {\sc WeightedLocallyBoundedListEdgeColoring}\\
\textit{Instance}: A graph $G=(V,E)$, a weight function $w: E \rightarrow \mathbf N^*$, a partition $E_1, \dots, E_p$ of the edge set $E$, a list of $pk$ integral bounds $W_{11}, \dots, W_{1k}, \dots, W_{p1},$ $\dots, W_{pk}$ such that $\sum_{j=1}^{k} W_{ij} = w(E_i)$ for each $i \in \{1, \dots, p\}$ (where $w(E_i)=\sum_{e:e \in E_i} w(e)$), and, for each edge $e$, a list of possible colors $L(e) \subseteq \{1, \dots, k\}$.\\
\textit{Question}: Decide whether there exists an edge coloring of $G$ such that, for each $i \in \{1, \dots, p\}$ and for each $j \in \{1, \dots, k\}$, the total weight of edges having color $j$ in $E_i$ is $W_{ij}$, and the color of $e$ belongs to $L(e)$ for each edge $e$.

\subsection{\textbf{NP}-completeness proofs for edge colorings}\label{sect:NPC-edge-coloring}

We can prove that this problem is \textbf{NP}-complete by using a reduction from {\sc Partition}, similar to the one in Theorem \ref{th:NPC-partition} (so we have $p=1$, $k=2$ and $W_{11}=W_{12}=B$). Each isolated vertex $v_i$ becomes an isolated edge $e_i$, and we define $w(e_i)=a_i$ and $L(e_i)=\{1,2\}$ for each $i \in \{1, \dots, n\}$. Again, we can make this instance connected, and obtain a chain, by setting $k=3$ and adding new edges of color 3 between the $e_i$'s. However, the problem is trivial when $k=2$ and $G$ is connected, since $G$ has maximum degree 2 in this case, and hence $G$ is either a path or an (even) cycle, and has only two possible edge-colorings.

The main goal of this section is to prove that {\sc WeightedLocallyBoundedListEdgeColoring} is tractable in the graphs considered in the previous sections, under the same assumptions as its vertex coloring counterpart. However, we shall first justify that, as in the case of vertex colorings, dropping any of these assumptions leads to \textbf{NP}-completeness, without looking further into other possible special cases, for the sake of brevity. The previous paragraph settles the case of arbitrary edge weights, so let us look at the other cases.

When the input graph is neither a cograph nor a split graph, and has an unbounded tree-width, {\sc WeightedLocallyBoundedListEdgeColoring} is \textbf{NP}-complete even when $k=3$, $p=1$, and we have both $L(e) = \{1, 2, 3\}$ and $w(e)=1$ for each edge $e$. Indeed, deciding whether the edges of a graph can be properly colored using three colors is known to be \textbf{NP}-complete \cite{refHolyer81}.

When $p$ is arbitrary, we need an adaptation of the proof of Theorem \ref{th:NPC-forest-1in3SAT}:

\begin{Th}\label{th:NPC-edge-coloring-p-arbitrary}
{\sc WeightedLocallyBoundedListEdgeColoring} is strongly \textbf{NP}-complete in graphs where each connected component is either a cycle of length four or a single edge, even when $k=2$, $w(e)=1$ for each edge $e$, and each edge can take any color.
\end{Th}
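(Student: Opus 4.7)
The plan is to adapt the reduction from Theorem \ref{th:NPC-forest-1in3SAT} to edge colorings, using cycles of length four as binary gadgets in place of stars. The key observation is that in any proper 2-edge-coloring of a $C_4$, opposite edges must take the same color, so a $C_4$ admits exactly two valid 2-edge-colorings (exchanged by swapping the two colors). This gives a natural way to encode a single bit per 4-cycle, and the role played by the leaves of the star gadgets in Theorem \ref{th:NPC-forest-1in3SAT} will be played here by one distinguished edge per 4-cycle.

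Given an instance of {\sc MonotoneOneInThreeSAT} with $\nu$ variables and $\mu$ clauses, and writing $occ(i)$ for the number of occurrences of $x_i$, I would introduce, for each variable $x_i$, disjoint 4-cycles $C_i^1,\ldots,C_i^{occ(i)}$. In each $C_i^j$ the edges are labelled $\alpha_i^j,\beta_i^j,\gamma_i^j,\delta_i^j$ in cyclic order, so $\alpha_i^j$ and $\gamma_i^j$ share a common color in any valid 2-edge-coloring, as do $\beta_i^j$ and $\delta_i^j$ (with the opposite color). Set $k=2$, $w(e)=1$ and $L(e)=\{1,2\}$ for every edge, and view $\alpha_i^j$ as the ``$j$-th occurrence of $x_i$''. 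Three families of partition cells are then introduced. For each $i$ and each $j=1,\ldots,occ(i)-1$, a \emph{chain cell} $V_h=\{\gamma_i^j,\beta_i^{j+1}\}$ with $(W_{h1},W_{h2})=(1,1)$: since $\gamma_i^j$ has the color of $\alpha_i^j$ and $\beta_i^{j+1}$ has the color opposite to $\alpha_i^{j+1}$, this forces $\alpha_i^j$ and $\alpha_i^{j+1}$ to share the same color, so by transitivity all the $\alpha_i^j$'s share a common color, which I call the value of $x_i$. For each clause $x_{i_1}\vee x_{i_2}\vee x_{i_3}$, a \emph{clause cell} collecting the three corresponding occurrence edges with $(W_{h1},W_{h2})=(1,2)$: exactly one of those edges then gets color $1$, which, declaring \emph{true} to correspond to color $1$, is exactly the monotone one-in-three condition. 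Finally, \emph{padding cells} accommodate the edges not yet placed, namely $\beta_i^1$, $\gamma_i^{occ(i)}$ and each $\delta_i^j$: for each such edge I add one fresh single edge and pair them in a $(1,1)$ cell.

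The main obstacle is this padding step: the leftover edges have colors fully determined by the value of $x_i$, but this determination flips symmetrically between the two valuations, so grouping several leftovers together would impose an asymmetric $W$-vector and silently pin down the truth value of some variable. Pairing each leftover with a fresh single edge in a $(1,1)$ cell avoids this, since such a cell is satisfied whatever color the leftover edge takes. Each edge of the constructed graph then lies in exactly one partition cell, every connected component is a $C_4$ or a single edge, and the equivalence between satisfying assignments of {\sc MonotoneOneInThreeSAT} and valid edge colorings follows along the lines of Theorem \ref{th:NPC-forest-1in3SAT}.
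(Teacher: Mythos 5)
Your proposal is correct and follows essentially the same construction as the paper: $C_4$ gadgets whose opposite edges are forced to agree under a proper 2-edge-coloring, $(1,1)$-cells linking consecutive occurrence gadgets of a variable, $(1,2)$-cells for the clauses, and leftover edges paired with fresh isolated edges in $(1,1)$-cells. The only (cosmetic) difference is that the paper closes the chain of occurrence gadgets cyclically via a wrap-around cell $\{b_i^{occ(i)}, c_i^1\}$, whereas you chain linearly and pad the two endpoint edges instead.
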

\begin{proof}
Given an instance of {\sc MonotoneOneInThreeSAT}, we construct such a graph. More precisely, for each variable $x_i$, there are $occ(i)$ isolated edges (denoted by $e^j_i$) and $occ(i)$ vertex-disjoint cycles of length four, where $occ(i)$ is the number of occurrences of $x_i$ in the set of clauses. For each $i \in \{1, \dots, \nu\}$ and each $j \in \{1, \dots, occ(i)\}$, let us denote by $a^j_i,b^j_i,c^j_i,d^j_i$ the four edges of the $j$th of these cycles, and assume for instance that $a^j_i$ and $c^j_i$ are vertex-disjoint, and that so are $b^j_i$ and $d^j_i$. Moreover, each edge has weight 1 and can take any of the two colors. Intuitively, in the $j$th cycle associated with variable $x_i$, the edge $b^j_i$ will be used to interact with the $j+1$th one, the edge $c^j_i$ will be used to interact with the $j-1$th one, the edge $d^j_i$ will only be useful to obtain a cograph, and the color of the edge $a^j_i$ will reflect the value of $x_i$ (\emph{true} or \emph{false}).

Let us define formally the partition that we use, and the associated color bounds. For each $i \in \{1, \dots, \nu\}$ and each $j \in \{1, \dots, occ(i)\}$, we set $W_{h1}=W_{h2}=1$, as well as $V_h = \{b^j_i,c^{j+1}_i\}$ if $j < occ(i)$ and $V_h = \{b^{occ(i)}_i,c^1_i\}$ otherwise, where $h = 2 \sum_{q=1}^{i-1} occ(q) + j$. Moreover, for each $i \in \{1, \dots, \nu\}$ and each $j \in \{1, \dots, occ(i)\}$, we set $V_{h'} = \{d^j_i,e^j_i\}$ and $W_{h'1}=W_{h'2}=1$, where $h' = 2 \sum_{q=1}^{i-1} occ(q) + occ(i) + j$. Note that $k=2$, and hence, in the $j$th cycle of length four associated with variable $x_i$, edges $a^j_i$ and $c^j_i$ must take the same color, and edges $b^j_i$ and $d^j_i$ must take the same color. For each $i$, by the way we defined the $V_h$'s and $V_{h'}$'s so far, \emph{all} the $a^j_i$'s must take the same color (1 or 2). We end the reduction by defining one set of the partition for each clause: if the $l$th clause consists of the $j_1$th occurrence of variable $x_{i_1}$, of the $j_2$th occurrence of variable $x_{i_2}$, and of the $j_3$th occurrence of variable $x_{i_3}$, then we define $V_{h''} = \{a^{j_1}_{i_1}, a^{j_2}_{i_2}, a^{j_3}_{i_3}\}$, $W_{h''1}=1$ and $W_{h''2}=2$, where $h'' = 2 \sum_{q=1}^{\nu} occ(q) + l$. This implies that $p = 2 \sum_{q=1}^{\nu} occ(q) + \mu$, and this yields the following equivalence between the solutions of the two instances: for each $i$, $x_i = true$ if and only if $a^j_i$ has color 1 for each $j$, which concludes the proof.
\end{proof}

When the number of colors is arbitrary, {\sc WeightedLocallyBoundedListEdgeColoring} is strongly \textbf{NP}-complete, as shown by an easy reduction from {\sc $3-$Partition}, similar to the one in Theorem \ref{th:NPC-3partition} (so we have $p=1$, $k=n$, and $W_{1c} = B$ for each $c \in \{1, \dots, n\}$). Each isolated vertex $v_i$ becomes an isolated edge $e_i$, and we define $w(e_i)=a_i$ and $L(e_i)=\{1, \dots, k\}$ for each $i \in \{1, \dots, 3n\}$.

%By reductions from {\sc $3-$Partition} or {\sc MonotoneOneInThreeSAT}, similar to the ones in Section \ref{sect:NPC}, we can also show that {\sc WeightedLocallyBoundedListEdgeColoring} remains \textbf{NP}-complete when the edge weights are polynomially bounded, if either $k$ or $p$ is not fixed.

Note that the graphs used in the reductions described above consist of vertex-disjoint cycles of length four (i.e., complete bipartite graphs with two vertices on each side) and/or of isolated edges. In other words, they are both cographs and graphs of tree-width at most 2. Moreover, we can easily turn the instances used in the case where the number $k$ of colors is arbitrary into split graphs.

To see it, start from the above reduction, but define $k=n+\frac{3n(3n-1)}{2}$ instead of $k=n$. Then, choose one endpoint of each $e_i$, $i \in \{1, \dots, 3n\}$, and add $\frac{3n(3n-1)}{2}$ edges (i.e., a clique) between these $3n$ vertices (this yields a split graph). These edges will be denoted by $f_1, f_2, \dots$, and can take any color. We define $w(f_j)=1$ and $W_{1(n+j)} = 1$ for each $j \in \{1, \dots, \frac{3n(3n-1)}{2}\}$. As we can assume without loss of generality that $a_i > 1$ for each $i$ (simply multiply $B$ and each $a_i$ by two in the original instance of {\sc $3-$Partition}, in order to obtain such an equivalent instance), this implies that each color $c \geq n+1$ will be used by one and only one edge $f_j$, which shows the equivalence between the two reductions.

Hence, from now on, we shall assume that $k$ is fixed, as otherwise {\sc WeightedLocallyBoundedListEdgeColoring} is \textbf{NP}-complete. In the following subsection, we give efficient algorithms to solve this problem when $k = O(1)$.

\subsection{Efficient algorithms for edge colorings when $k = O(1)$}\label{sect:algo-edge-coloring}

We start by considering graphs of bounded tree-width. We assume that such a graph $G$ is given, together with a tree-decomposition $T=(I,F)$ of $G$ having minimum width, and that $p$ is fixed (as otherwise {\sc WeightedLocallyBoundedListEdgeColoring} is \textbf{NP}-complete from Theorem \ref{th:NPC-edge-coloring-p-arbitrary}). We then define the following function, before showing how to compute it for each node type:
\begin{itemize}
\item $g(i, c_i, \omega_{11}, \dots, \omega_{1k}, \dots, \omega_{p1}, \dots, \omega_{pk})$ = \emph{true} if there is a list-edge-coloring of $G[T_i]$ where each edge $e \in Y_i$ has color $c_i(e)$ and where the total weight of edges of $G[T_i]$ having color $c$ in $E_h$ is $\omega_{hc}$, and \emph{false} otherwise.
\end{itemize}

\paragraph*{If $i$ is a forget node.}
Let $j$ denote its child such that $X_j=X_i \cup \{v\}$:
\begin{equation*}
g(i, c_i, \omega_{11}, \dots, \omega_{pk})=\bigvee_{c_j:(c_j(e) = c_i(e)~\forall e \in Y_i) \wedge (c_j(uv)\in L(uv)~\forall u \in X_i:uv \in Y_j)} g(j, c_j, \omega_{11}, \dots, \omega_{pk})
\end{equation*}

\paragraph*{If $i$ is an introduce node.}
Let $j$ denote its child such that $X_j=X_i \setminus \{v\}$. We denote by $e_1,\dots,e_d$ the edges incident to $v$ in $Y_i$ (i.e., the edges in $Y_i \setminus Y_j$), and assume that $e_h \in E_{\phi(h)}$ for each $h \in \{1,\dots,d\}$.
\begin{equation*}
g(i, c_i, \omega_{11}, \dots, \omega_{pk}) = \mathcal{E} \bigwedge g(j, c_j, \omega_{11}, \dots, \omega_{\phi(1) c_i(e_1)}-w(e_1), \dots, \omega_{\phi(d) c_i(e_d)}-w(e_d), \dots, \omega_{pk})\\
\end{equation*}
where $\mathcal{E} = (c_j(e)=c_i(e)~\forall e \in Y_j) \bigwedge (c_i(e) \in L(e)~\forall e \in \{e_1, \dots, e_d\}) \bigwedge (c_i(e) \neq c_i(f)~\forall e \in \{e_1, \dots, e_d\}, \forall f \in Y_i : e \neq f \text{ and both edges share a common vertex})\\\bigwedge (\omega_{\phi(h) c_i(e_h)} \geq w(e_h)~\forall h \in \{1,\dots,d\})$.

\paragraph*{If $i$ is a join node.}
Let $j$ and $l$ denote its two children, and let $w^i_{hc}$ be the total weight of edges $e \in Y_i \cap E_h$ such that $c_i(e)=c$, for each $h$ and $c$.
\begin{equation*}
  g(i, c_i, \omega_{11}, \dots, \omega_{pk})=\bigvee_{(q_{11},\dots,q_{pk}) \in Q^i}\big(g(j, c_i, \omega_{11}+w^i_{11}-q_{11}, \dots, \omega_{pk}+w^i_{pk}-q_{pk}) \wedge g(l, c_i, q_{11}, \dots, q_{pk})\big)
\end{equation*}
where $Q^i=\{(q_{11},\dots,q_{pk}):w^i_{hc} \leq q_{hc}\leq \omega_{hc}~\forall h~\forall c\}$.

\paragraph*{If $i$ is a leaf node.}
In this case, we just have to check that the coloring function $c_i$ provides a valid locally bounded list-edge-coloring of $G[X_i]$.
\begin{equation*}
g(i, c_i, \omega_{11}, \dots, \omega_{pk})= \mathcal{E} \bigwedge (w(\{e\in Y_i\cap E_h:c_i(e)=c\}) = \omega_{hc}~\forall h~\forall c)
\end{equation*}
where $\mathcal{E}=(c_i(e) \in L(e)~\forall e \in Y_i) \bigwedge (c_i(u_1v) \neq c_i(u_2v)~\forall u_1,u_2,v \in X_i : u_1v \in Y_i, u_2v \in Y_i, u_1 \neq u_2)$.

\paragraph*{Root value.} Again, the answer \emph{true} or \emph{false} for the initial instance is obtained at the root $r$ by computing the following value:
\begin{equation*}
\bigvee_{c_r:c_r(e)\in L(e)~\forall e \in Y_r} g(r, c_r, W_{11}, \dots, W_{pk})
\end{equation*}

\noindent The proof of correctness is similar to the one given in Lemma \ref{lem:algo-btw}, and the analysis of the running time is similar to the one given just before stating Theorem \ref{th:algo-btw}. Therefore, they are both omitted. This yields the following result:

\begin{Th}
In graphs of bounded tree-width, {\sc WeightedLocallyBoundedListEdgeColoring} can be solved:
\begin{itemize}
\item in pseudopolynomial time when $p$ and $k$ are fixed,
\item in polynomial time when (i) $p$ and $k$ are fixed, and (ii) all edge weights are polynomially bounded.
\end{itemize}
\end{Th}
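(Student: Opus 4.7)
My plan is to adapt the template of Lemma \ref{lem:algo-btw} to the edge-coloring setting. I begin with a structural observation analogous to the vertex-coloring remark: by Condition (2) of a tree decomposition, every edge $uv$ lies in $Y_i$ for some $i$, and by Condition (3) the set of bags containing both $u$ and $v$ (equivalently, the bags where $uv \in Y$) forms a connected subtree of $T$. Hence a list edge-coloring of $G$ is valid if and only if (a) inside every bag $X_i$ no two edges of $Y_i$ sharing a vertex receive the same color and each edge takes a color from its list, and (b) each edge is assigned a single color across all bags in which it appears. The recurrences defining $g(\cdot)$ are tailored to enforce these two local properties.

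I would then verify the correctness of $g$ by induction on the nice tree decomposition, one node type at a time. The leaf case is immediate from the defining formula. For a forget node $i$ with child $j$ such that $X_j = X_i \cup \{v\}$, we have $T_i = T_j$ and $G[T_i] = G[T_j]$, so no weight adjustment is needed; the recurrence simply ``projects'' $c_j$ onto $Y_i$ while preserving the colors of edges still present in $Y_i$ and enforcing list-validity on the edges being dropped from the bag. For an introduce node with new vertex $v$, the edges $e_1, \ldots, e_d$ are exactly the elements of $Y_i \setminus Y_j$, and the equation subtracts each $w(e_h)$ from the corresponding entry $\omega_{\phi(h)c_i(e_h)}$, while $\mathcal{E}$ enforces list-validity on these new edges together with properness with respect to all other edges of $Y_i$ sharing an endpoint. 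For a join node, $T_j$ and $T_l$ overlap exactly on the bag $X_i$, hence on $Y_i$, so the weight of each edge in $Y_i$ would be counted twice when combining the two children, which is why the recurrence adds back $w^i_{hc}$ inside each child's tuple; consistency of the coloring on $Y_i$ between the two branches is ensured by passing the same $c_i$ to both recursive calls. Finally, the root equation enumerates all list-colorings $c_r$ of $Y_r$ and forces $\omega_{hc} = W_{hc}$.

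The step I expect to require the most care is the verification that no properness violation slips through undetected. Concretely, for two edges $e, f$ sharing a common endpoint, I must argue that the requirement $c(e) \neq c(f)$ is checked at some node of $T$: it suffices to observe that, by the connected-subtree property, both $e$ and $f$ simultaneously belong to $Y_{i^\star}$ where $i^\star$ is the introduce node of the later-introduced endpoint of whichever edge is added to $Y$ second, and the predicate $\mathcal{E}$ of that introduce node rules out the conflict. Once this is in place, the running-time analysis follows the pattern preceding Theorem \ref{th:algo-btw}. Writing $t = tw(G)$, a bag induces at most $\binom{t+1}{2}$ edges, giving $O(k^{t(t+1)/2})$ colorings of $Y_i$; the number of tuples $(\omega_{11}, \ldots, \omega_{pk})$ is $O(W_{\max}^{pk})$ with $W_{\max} = \max_{h,c} W_{hc}$; combined with $|I| = O(n)$, the table contains $O(n \cdot k^{t(t+1)/2} \cdot W_{\max}^{pk})$ entries; and the dominant per-entry work is the join node's enumeration over $Q^i$, costing $O(W_{\max}^{pk})$. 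For fixed $p$, $k$, and $t$ this is pseudopolynomial, and polynomially bounded edge weights upgrade it to polynomial, matching both claims of the theorem.
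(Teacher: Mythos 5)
Your plan follows the route the paper itself gestures at (the paper only says the proof is ``similar to the one given in Lemma \ref{lem:algo-btw}''), and you correctly isolate the one point where the analogy with vertex colorings is not automatic; unfortunately, the argument you give for that point is wrong, and the gap is genuine. Two edges $uv$ and $uw$ sharing the endpoint $u$ need not belong simultaneously to any $Y_i$: that would require a bag containing all of $u$, $v$ and $w$, and nothing forces one to exist when $vw \notin E$. Concretely, the star on $\{u,v,w\}$ with center $u$ admits the nice width-$1$ decomposition with leaf bag $\{u,w\}$, a forget node $\{u\}$ and a root introduce node $\{u,v\}$; the edge $uw$ is committed to a color at the leaf and then leaves $Y$, so the predicate $\mathcal{E}$ at the introduce node (which only compares the new edges against edges currently in $Y_i$) never tests $c(uv) \neq c(uw)$, and the recurrences would certify a ``proper'' one-color edge coloring of $K_{1,2}$. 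In Helly-property terms: $T_u \cap T_v$ and $T_u \cap T_w$ are nonempty, but their common intersection is guaranteed to be nonempty only when $T_v \cap T_w$ is, which fails exactly in this situation. So part (a) of your opening structural observation is already false, and the induction does not close for the leaf, introduce, or join cases (at a join node the same phenomenon occurs between an edge of $G[T_j]$ and an edge of $G[T_l]$, each outside $Y_i$, meeting at a bag vertex).

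The theorem is still salvageable, but an extra idea is needed. The cheapest fix uses the hypothesis $k=O(1)$: a proper $k$-edge-coloring forces $\Delta(G) \leq k$ (reject otherwise), so the graph has bounded degree, and one may replace every bag $X_i$ by $X_i \cup N(X_i)$; this is again a tree decomposition, of width at most $(k+1)(tw(G)+1)-1$, in which every vertex shares a bag with all of its neighbors, and only then does your coexistence claim become true. Alternatively, enrich the state of $g$ by recording, for each $u \in X_i$, the set of colors already used by edges of $G[T_i]$ incident to $u$; this multiplies the table size by at most $2^{k(tw(G)+1)}$, a constant under the theorem's hypotheses, and lets the missing test be performed when new edges at $u$ are introduced. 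With either repair, your running-time analysis (which is otherwise correct) still yields the claimed pseudopolynomial, respectively polynomial, bounds.
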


Again, one can associate a profit function to the edges of $G$, and then modify slightly this dynamic programming algorithm, in order to compute a feasible weighted and locally bounded list-edge-coloring of maximum profit (if any).

We then turn our attention towards cographs, and assume that $p$ is fixed (as otherwise an \textbf{NP}-completeness result holds, from Theorem \ref{th:NPC-edge-coloring-p-arbitrary}). Since we are looking for a proper edge coloring, the maximum degree of such a graph $G$ is bounded by the number $k$ of colors. As a cograph contains no induced path on four vertices, the diameter of any of its connected components is at most two. In other words, any vertex $v$ has at most $k$ neighbors, each of its neighbors has at most $k-1$ other neighbors, and there can be no other vertex in the connected component $v$ belongs to. This implies that any connected component contains at most $1+k+k(k-1)=O(k^2)$ vertices, and hence $O(k^4)$ edges. Therefore, when $k=O(1)$, each connected component has $O(1)$ possible edge colorings. For each one of them, one can easily check whether it defines a proper list-coloring of the component. Moreover, as $p$ and $k$ are fixed, these colorings can be combined one by one, by using an efficient dynamic programming algorithm similar to the one outlined at the beginning of Section \ref{sect:NPC}, that needs to store a (pseudo)polynomial amount of information, namely $O((\max_{h,c} W_{hc})^{kp})$.

Finally, let us study the case of split graphs. Again, since we are looking for a proper edge coloring, the maximum degree of such a graph $G$ is bounded by the number $k$ of colors, and isolated vertices play no role in this case (they can be removed). Hence, each vertex in $K$, the clique part of $G$, cannot have more than $k$ neighbors, which implies that $K$ cannot contain more than $k+1$ vertices. As any edge of $G$ is incident to one of these vertices, and as each of these vertices cannot have more than $k$ incident edges, the graph $G$ contains $O(k^2)$ edges. Therefore, when $k = O(1)$ (and hence $O(k^2) = O(1)$), {\sc WeightedLocallyBoundedListEdgeColoring} is trivial in split graphs.

\section{Conclusion and open problems}\label{sect:conclusion}

We conclude by giving an overview of all our results concerning {\sc WeightedLocallyBoundedListColoring} in cographs, split graphs, and graphs of bounded tree-width. The three following tables summarize these results, each one being devoted to one of these classes of graphs (here, \emph{NPC} means \emph{\textbf{NP}-complete}, \emph{P} means \emph{solvable in polynomial time}, "?" means \emph{arbitrary}, and $tw(G)$ denotes the tree-width of the input graph $G$).

\bigskip
%\hspace{-3cm}
\centerline{
\begin{tabular}{|l|l|l|l|l|l|}
\hline
\multicolumn{6}{|c|}{\emph{GRAPHS OF BOUNDED TREE-WIDTH}}\\
\hline
\hline
$tw(G)$ & $w(v)$ & $k$ & $p$ & List-coloring & Result\\
\hline
\hline
$tw(G) = 0$ & ? & $k=2$ & $p=1$ & NO & \emph{NPC} (in the weak sense) from Theorem \ref{th:NPC-partition}\\
\hline
$tw(G) = 0$ & $w(v)=poly(n)$ $\forall v \in V$ & ? & $p=1$ & NO & \emph{NPC} from Theorem \ref{th:NPC-3partition}\\
\hline
$tw(G) = 0$ & $w(v)=1$ $\forall v \in V$ & ? & ? & YES & \emph{P} from Proposition \ref{prop:isolated-vertices-poly}\\
\hline
$tw(G) = 0$ & $w(v)=poly(n)$ $\forall v \in V$ & $k=O(1)$ & ? & YES & \emph{P} from Proposition \ref{prop:isolated-vertices-k-fixed}\\
\hline
$tw(G) = 1$ & $w(v)=1$ $\forall v \in V$ & ? & $p=1$ & YES & \emph{NPC} from \cite{refBodlaender05,refGravier99} and Theorems \ref{th:NPC-forests} and \ref{th:NPC-linear-forests}\\
\hline
$tw(G) = 1$ & $w(v)=1$ $\forall v \in V$ & $k=2$ & ? & NO & \emph{NPC} from Theorems \ref{th:NPC-forest-1in3SAT} and \ref{th:NPC-linear-forest-1in3SAT}\\
\hline
$tw(G) = O(1)$ & $w(v)=poly(n)$ $\forall v \in V$ & $k=O(1)$ & $p=O(1)$ & YES & \emph{P} from Theorem \ref{th:algo-btw}\\
\hline
\end{tabular}}

\bigskip
%\hspace{-3cm}
\centerline{
\begin{tabular}{|l|l|l|l|l|}
\hline
\multicolumn{5}{|c|}{\emph{COGRAPHS}}\\
\hline
\hline
$w(v)$ & $k$ & $p$ & List-coloring & Result\\
\hline
\hline
? & $k=2$ & $p=1$ & NO & \emph{NPC} (in the weak sense) from Theorem \ref{th:NPC-partition}\\
\hline
$w(v)=poly(n)$ $\forall v \in V$ & ? & $p=1$ & NO & \emph{NPC} from Theorem \ref{th:NPC-3partition}\\
\hline
$w(v)=1$ $\forall v \in V$ & ? & $p=1$ & YES & \emph{NPC} from Corollary \ref{th:NPC-forests-cographs} and \cite{refBodlaenderJansen95} (without list-coloring)\\
\hline
$w(v)=1$ $\forall v \in V$ & $k=2$ & ? & NO & \emph{NPC} from Corollary \ref{th:NPC-forests-cographs-1in3SAT}\\
\hline
$w(v)=poly(n)$ $\forall v \in V$ & $k=O(1)$ & $p=O(1)$ & YES & \emph{P} from Theorem \ref{th:algo-cographs}\\
\hline
$w(v)=1$ $\forall v \in V$ & ? & $p=1$ & YES & \emph{NPC} in complete bipartite graphs from Theorem \ref{th:NPC-complete-bipartite}\\
\hline
$w(v)=poly(n)$ $\forall v \in V$ & $k=O(1)$ & ? & YES & \emph{P} in complete bipartite graphs from Theorem \ref{th:complete-bipartite-graphs-k-fixe}\\
\hline
? & ? & ? & YES & \emph{P} in complete graphs from Theorem \ref{th:complete-graphs}\\
\hline
\end{tabular}}

\bigskip
%\hspace{-3cm}
\centerline{
\begin{tabular}{|l|l|l|l|l|}
\hline
\multicolumn{5}{|c|}{\emph{SPLIT GRAPHS}}\\
\hline
\hline
$w(v)$ & $k$ & $p$ & List-coloring & Result\\
\hline
\hline
? & $k=2$ & $p=1$ & NO & \emph{NPC} (in the weak sense) from Theorem \ref{th:NPC-partition}\\
\hline
$w(v)=poly(n)$ $\forall v \in V$ & ? & $p=1$ & NO & \emph{NPC} from Theorem \ref{th:NPC-3partition}\\
\hline
$w(v)=1$ $\forall v \in V$ & ? & $p=1$ & YES & \emph{NPC} from \cite{refJansen} and Theorem \ref{th:split-graphs-NPC-unit-weights}\\
\hline
$w(v)=1$ $\forall v \in V$ & ? & ? & NO & \emph{NPC} from Theorem \ref{th:split-graphs-NPC-unit-weights-no-list-coloring} (even if $W_{hc} \in \{0,1\}$ for each $h$ and $c$)\\
\hline
$w(v)=1$ $\forall v \in V$ & ? & $p=1$ & NO & \emph{NPC} from Theorem \ref{th:split-graphs-NPC-unit-weights-no-list-coloring-p-fixed}\\
\hline
$w(v)=1$ $\forall v \in V$ & ? & ? & NO & \emph{P} if $O(1)$ singular colors from Theorem \ref{th:split-graphs-poly-max-flow}\\
\hline
$w(v)=poly(n)$ $\forall v \in V$ & $k=O(1)$ & ? & YES & \emph{P} from Theorem \ref{th:split-graphs-P-k-fixed}\\
\hline
\end{tabular}}

~\\~\\
The following table summarizes results concerning arbitrary graphs:

\bigskip
%\hspace{-3cm}
\centerline{
\begin{tabular}{|l|l|l|l|l|}
\hline
$w(v)$ & $k$ & $p$ & List-coloring & Result\\
\hline
\hline
$w(v)=poly(n)$ $\forall v \in V$ & $k=2$ & $p=O(1)$ & YES & \emph{P} from Section \ref{sect:NPC}\\
\hline
$w(v)=1$ $\forall v \in V$ & $k=3$ & $p=1$ & NO & \emph{NPC} (since 3-coloring is)\\
\hline
\end{tabular}}

~\\

Again, we highlight that, in Section \ref{sect:edge-colorings}, similar results for edge colorings (including some specific hardness proofs, such as the one of Theorem \ref{th:NPC-edge-coloring-p-arbitrary}) are provided in cographs, split graphs, and graphs of bounded tree-width.

~\\

%\newpage

One may actually wonder how {\sc WeightedLocallyBoundedListColoring} behaves in bipartite graphs, which we did not mention yet. The instances used in the proofs of Theorems \ref{th:NPC-partition} to \ref{th:NPC-linear-forest-1in3SAT} being such graphs, the only relevant question would be: is this problem polynomial-time solvable in bipartite graphs when $k = O(1)$, $p = O(1)$, and all the vertex weights are polynomially bounded? Unfortunately, the answer is \emph{no}. Indeed, Bodlaender and Jansen proved in \cite[Theorem 5.1]{refBodlaenderJansen95} that the \emph{equitable} coloring problem is strongly \textbf{NP}-complete in bipartite graphs (this theorem was stated only for the bounded coloring problem, but the proof actually holds in this case as well, without any modification), even with only $k=3$ colors, and so is {\sc WeightedLocallyBoundedListColoring} when $p=1$, $k=3$, and any vertex has weight 1 and can take any color.

One may also be interested in FPT algorithms for {\sc WeightedLocallyBoundedListColoring} with polynomially bounded vertex weights. Unfortunately, the results in Theorems \ref{th:algo-btw}, \ref{th:algo-cographs} and \ref{th:split-graphs-P-k-fixed} are best possible, as this problem is \textbf{W[1]}-hard with respect to $k$ in graphs of tree-width 0 (which are also cographs and split graphs) if $p=1$, even with no list-coloring. To see it, take any instance of \emph{unary} bin packing with $l$ bins, which is \textbf{W[1]}-hard with respect to $l$ \cite{refJansenMarx13}, and build an equivalent instance of the target problem: each of the $n$ items becomes a vertex (of the same weight $w_i$), the $k=l$ color bounds equal the common bin capacity $B$, and we add $kB-\sum_{i=1}^n{w_i}$ more isolated vertices of weight 1.

Finally, the main open question we would like to mention as worth studying is a direct consequence of our results in Sections \ref{sect:NPC} and \ref{sect:btw}. Indeed, Theorems \ref{th:NPC-partition} to \ref{th:algo-btw} leave as open the case where $p=1$, the tree-width is $O(1)$, and each vertex has weight 1 and can take any color. In other words, what is the complexity of the capacitated coloring problem (as defined in \cite{refBonomo11}) in graphs of bounded tree-width when $k$ is arbitrary? Bodlaender and Fomin managed to prove in \cite{refBodlaender05} that the bounded coloring problem is tractable in this case, a question that was open for a long time before they finally settled it. The capacitated coloring problem is more general, and should probably be harder (even in this case), but no one has been able to prove it so far. Actually, there is a reduction from the capacitated coloring problem to the bounded (or equitable) coloring problem: to see it, take any instance of the former problem with color bounds $n_1 \geq n_2 \geq \ldots \geq n_k$, add isolated vertices so that the total number of vertices reaches $\sum_{i=1}^k n_i$, and then add $k$ sets of vertices inducing independent sets, in such a way that the $i$th of these sets contains $n_1-n_i+1$ isolated vertices and any two vertices in any two different sets are linked by an edge. We obtain an instance of the latter problem by setting the common color bound to $n_1+1$, as it can be checked that \emph{all} the vertices in the $i$th set \emph{must} take color $i$. However, in general, this reduction preserves neither the property of having bounded tree-width nor the one of being a split graph (although it does preserve the one of being a cograph). Also recall that the capacitated coloring problem is \textbf{NP}-complete in cographs and split graphs when $k$ is arbitrary, from \cite{refBodlaenderJansen95} and Corollary \ref{cor:split-graphs-NPC-capacitated-coloring} respectively.

\end{document}